\newcommand\GGG{\color{blue}}
\newcommand\BBB{\color{black}}
\newcommand {\bydef}{\,\raise.07485ex\hbox{:}\kern-.025em\hbox{=}\,}
\newcommand{\R}{\mathbb{R}}
\theoremstyle{definition}
\newtheorem{lemma}{Lemma}
\newtheorem{proposition}{Proposition}
\newtheorem{theorem}{Theorem}
\newtheorem{remark}{Remark}
\newtheorem{definition}{Definition}
\begin{document}

\begin{keyword}
Magnetic actuation, Non-simple materials, Distributed torques, Variational convergence, Size effects.  
\end{keyword}

\begin{frontmatter}

\title{A theory of magneto-elastic nanorods obtained through rigorous dimension reduction}

\author[sapienza]{Jacopo Ciambella}
\ead{jacopo.ciambella@uniroma1.it}
\affiliation[sapienza]{organization={Dipartimento di Ingegneria Strutturale e Geotecnica, Sapienza Universit\`a di Roma},
            addressline={Via Eudossiana 18}, 
            city={Roma},
            postcode={00184}, 
            country={Italy}}
\author[utia]{Martin Kru\v z\'ik}
\ead{kruzik@utia.cas.cz}
\affiliation[utia]{organization={Czech Academy of Sciences, Institute of Information Theory and Automation},
          addressline={Pod Vod\'arenskou v\v e\v z\'i 4}, 
         city={Prague 8},
    postcode={ CZ-182 00}, 
       country={Czechia}}
\author[romatre]{Giuseppe Tomassetti}
\ead{giuseppe.tomassetti@uniroma3.it}
\affiliation[romatre]{organization={Engineering Department, Roma Tre University},
          addressline={Via Vito Volterra 62}, 
         city={Roma},
    postcode={00146}, 
       country={Italy}}

\begin{abstract}
  Starting from a two-dimensional theory of magneto-elasticity for fiber-reinforced magnetic elastomers we carry out a rigorous dimension reduction to derive a rod model that describes a thin magneto-elastic strip undergoing planar deformations. The main features of the theory are the following: a magneto-elastic interaction energy that manifests itself through a distributed torque; a penalization term that prevent local interpenetration of matter; a regularization that depends on the second gradient of the deformation and models microstructure-induced size effects. As an application, we study a problem involving magnetically-induced buckling and we show that the intensity of the field at the onset of the instability increases if the length of the rod is decreased. Finally, we assess the accuracy of the deduced model by performing numerical simulations where we compare the two-dimensional and the one-dimensional theory in some special cases and we observe excellent agreement.
\end{abstract}

\end{frontmatter}



\section{Introduction}

Magnetorheological elastomers (MREs) represent  a class of soft composites in which magnetic fillers are embedded into a soft polymeric matrix. The combined use of a compliant matrix and rigid particles make MREs mechanically sensitive to applied magnetic fields and  allows for their  static and dynamic shape programming as well as reconfiguration capabilities, which have been exploited in a number of engineering applications \cite{Wu2020}. These functionalities are particularly appealing when the overall dimensions of the actuators are at the millimeter scale, or even smaller, since at those scales, magnetic fields can be specified not only in magnitude, but also in direction and spatial gradients, enlarging the class of achievable shapes  \cite{Lum2016}. Over the past decades, many applications have been proposed to exploit MREs as gripping tools \cite{Ze2020}, manipulators \cite{Xu2019}, micro-swimmers \cite{Hu2018,Chen2020} or in biomimetic applications \cite{Evans2007,Hanasoge2020}.

A key step of the manufacturing process is the ``curing'' process that  creates long-chain polymers connected by cross--links  responsible for the elastic nature and recovery characteristics of the finished material. In addition, since curing is usually carried out under a magnetic field, the magnetic particles align themselves and form chain-like microstructures which make the finished material transversely isotropic \cite{Ciambella2018}. The presence of these aligned chains produce an actuation mechanism in the form of distributed torques which yields better steering and navigational capabilities at much smaller scales, and differentiate them from previously developed continuum robots \cite{Rikken2014,Stanier2016}.

To account for this actuation mechanism, several models were proposed in the literature. The theory of bulk magneto--elasticity was studied in the 1950s  with the pioneering works of Brown \cite{brown1966}, Truesdell, Toupin and Tiersten \cite{Tiersten1964} and later formulated in the context of large strain elasticity by Dorfmann and Ogden \cite{Dorfmann2003} and Kankanala and Triantafyllidis \cite{Kankanala2004}. The advantage of structural theories with magnetic forcing terms lies in their analytical tractability which allows, for instance, to study closed form solution for buckling and post-buckling regimes as well as inverse problems related to shape optimization \cite{Ciambella2020,durastanti,Lum2016,sano2021}.

Recent works have shown the feasibility of the so-called \emph{fiber-reinforced magnetic elastomers}. These materials are obtained by replacing the standard spherical fillers with carbon fibers coated with nickel \cite{Ciambella2017,Stanier2016}. The accurate control of the fiber orientation during the production process of the elastomer could have a significant effect on the roughness properties of the surface and potentially lead to new engineering applications including dampers and actuators. A continuum model for this new type of magnetic elastomers has been developed in \cite{Ciambella2018}.

In this paper we use as starting point the model proposed in \cite{Ciambella2018} to construct, through a rigorous deduction, a one--dimensional model for magneto-elastic nanorods. Being interested in actuators undergoing planar deformations, we consider the special case of plane elasticity. Moreover, in order to account for the internal length scale brought about by the embedded fibers we include a regularization that depends on the second gradient of the deformation (see the energy functional $\mathcal E$ defined in \eqref{eq:3} in Section \ref{sec:magneto_elastic_energy}). This additional term accounts for the microstructure related size effects which become significant when the specimen is very small \cite{Murmu2014,schraad1997}, and provides an alternative to non-local approaches based on integral-type formulations. By incorporating size effects, the resulting continuum theory is able to match the results of molecular dynamics simulations \cite{Duan2007,Li2015}. Thanks to this regularization we are able to apply the direct method of the calculus of variations to prove the existence of a minimizer for the equilibrium problem for a body of general shape in a fairly wide class of loading environments.  We then focus our attention on a model problem featuring a rod-like body having the shape of strip of constant length and vanishing thickness, clamped on one side and free at the other. Using techniques mutuated from $\Gamma$-convergence, we prove that the solutions of the equilibrium problem converge to the minimizers of an energy functional (see \eqref{eq:32} of Section \ref{sec:convergence_result}) whose arguments are two vector fields defined on the axis of the strip. These field describe placement, rotation, and stretch of the typical cross section of the rod. We derive the equations that govern the equilibrium of the rod, we perform a linearization, and we apply them to the problem of the buckling of a rod induced by a magnetic field, which we deal with through a semi-analytical approach to show how the presence of the internal scale induces a size-dependent effect. Finally, in order to assess the aforementioned convergence of solutions, we offer a numerical comparison between the two-dimensional and the one-dimensional theories.

The structure of the paper is as follows. In Sect. 2 the main modeling assumptions are introduced and the magneto--elastic energy of the composite material presented. The existence of minimizers of the energy is proved in Sect. 3, whereas the reduced order energy, obtained by $\Gamma$-convergence, is deduced in Sect. 4. The equations governing the motion of a clamped-free magneto--elastic rod are discussed in Sect. 5, whereas their linearization is presented in Sect. 6, and applied in Sect. 7 to determine the size-dependence of the critical field that induces magneto-elastic buckling. Finally in Sect. 8 the instability of the magnetic rod in a cantilever configuration is studied and the corresponding buckling load derived either numerically and analytically.

\section{The magneto-elastic energy}\label{sec:magneto_elastic_energy}
We use the standard notation for Lebesgue, Sobolev, and H\"older spaces (see for example \cite{Fucik-Kufner-John}). We consider a domain $\Omega\subset\mathbb R^2$ representing the reference configuration of the magneto-elastic body. We assume that the possible configurations of the body take place in the plane $\mathbb R^2$, and we represent each configuration through a deformation $y:\Omega\to\mathbb R^2$.

The orientation of the magnetic fibers in the reference configuration is described by a vector field $a:\Omega\to\mathbb R^2$  such that $|a|=1$. These fibers confer to the material both mechanical and magnetic anisotropy. As to the mechanical response, we describe it by a strain-energy density of the form
 \begin{equation}\label{eq:9}
    W(x,F)=W_e(F,a(x)\otimes a(x)),
  \end{equation}
  where $F=\nabla y$ is the deformation gradient, $W_e:\mathbb R^{2\times 2}\times \mathbb R^{2\times 2}\to(-\infty;+\infty]$ is a continuous isotropic function such that  $W_e(\cdot,A)$ is frame indifferent for every $A\in\mathbb R^{2\times 2}$. Thus, given any $F\in\mathbb R^{2\times 2}$ and any $A\in\mathbb R^{2\times 2}$, we have $W(Q^TFQ,Q^TAQ)=W_e(F,A)$ (isotropic response)  and $W(Q^TF,A)=W_e(F,A)$ (frame indifference) for every rotation matrix $Q$. It is worth pointing out that while the first argument of $W_e$ is subjected to the requirement of frame indifference, the second argument is a material parameter, fixed once and for all.

We denote by $h:\mathbb R^2\to\mathbb R^2$ the applied magnetic field permeating the entire space and we model the interaction between the body and the applied field through the following energy density, i.e., \emph{Zeeman} energy, $\mathfrak M:\Omega\times\R^2\times\R^{2\times 2}\to\R$ 
   \begin{equation}\label{eq:53}
     \mathfrak M(x,y,F)=-M(x)\color{black} h(y)\cdot \frac{F a(x)}{|F a(x)|},
  \end{equation}
  where $M:\Omega\to\R$ represents the  magnitude  of the permanent magnetization. For a detailed derivation of \eqref{eq:53} the interested reader is referred to \cite{Ciambella2018}. The main assumptions underlying \eqref{eq:53} are: (i) the material is composed of a dilute suspension of particles magnetically and elastically anisotropic according to the average direction $a$ at each material point, (ii) being dilute, the interaction between particles can be neglected, (iii) the spatial variation of the externally applied magnetic field can be neglected at the particle scale, meaning that the resulting magnetization within the particles is uniform. These assumptions have been experimentally verified in a number of papers \cite{Stanier2016, Lum2016, Hu2018}.

Finally, to model the size effects due to the presence of the magnetic fibers, we add a gradient energy density proportional to the $L^p$ norm of the second gradient of the deformation, with $p>2$. Summing up, we have the following regularized energy:
\begin{equation}\label{eq:3}
\mathcal E(y)=\int_{\Omega} W(x,\nabla y)+\mathfrak M(x,y,\nabla y)+\mu\ell^p|\nabla^{2} y|^p {\rm d}x_1{\rm d}x_2,
\end{equation}
where $\mu$ and $\ell$ are, respectively, a characteristic energy density, e.g., shear modulus, and a characteristic length. Materials whose stored-energy depends on the gradient of the deformation $F$ are examples of ``non-simple materials".
The higher-order  term incorporates scale effects in the mechanical properties which are usually relevant in micro-- and nano--structures (see for instance \cite{Evans2007,Duan2007}).
As we shall see below, the presence of the second gradients of the deformation bring about additional regularity, as well as compactness of the set of admissible deformations in a topology stronger than the weak $W^{1,p}$ topology, which is usually adopted in the approach to existence of solutions through the direct method.

\section{Existence of minimizers}

\subsection{The admissible set.}  Let $p>2$ be fixed. In the foregoing developments the following \emph{admissible sets} play
an important role:
\begin{equation}\label{eq:6}
  \mathcal{A}_c(\Omega)=\left\{{y} \in W^{2, p}\left(\Omega, \mathbb{R}^{2}\right)
    \textrm{ such that } \left.\begin{array}{l}
y(x)=x\text{ on }\Gamma,\\
 \displaystyle \|y\|_{W^{2,p}(\Omega;\mathbb R^2)}+\int_\Omega \frac 1 {|\det\nabla y|^q}{\rm d}x_1{\rm d}x_2\le c,
\\[0.8em]
\det\nabla {y}({x})>0\text { for a.e. } {x} \in {\Omega}, \\[0.5em]
\displaystyle \int_\Omega \det{\nabla y}\, {\rm d}x_1{\rm d}x_2\le\operatorname{meas}(y(\Omega))
\end{array}\right.\right\}.
\end{equation}
Here $c>0$ is a constant, $\Gamma\subset\partial\Omega$ is a part of the boundary having positive length, where the clamping condition $y(x)=x$ is enforced, and $q$ is an exponent satisfying the inequality
  \begin{equation}\label{eq:15}
    q\ge\frac{2p}{p-2}.
  \end{equation}
  Before proceeding further, we would like to comment on some properties of the admissible sets.

  \paragraph{Smoothness}  The first important property of the admissible set $\mathcal A_c(\Omega)$ is that each of its elements admits a continuous representative. Morrey's inequality \cite{Fucik-Kufner-John}, a fundamental result from the theory of function spaces, asserts that if a real-valued function $u$ belongs to the Sobolev space $W^{1,p}(U)$ with $U\subset\mathbb R^n$ a Lipschitz domain and with exponent $p$ greater than the dimension $n$ of the domain, then $u$ is H\"older continuous with exponent $\alpha=1-n/p$. In particular, $u$ is continuous up to the boundary of $U$. Thus, the equivalence class of $u$ in $W^{1,p}(U)$ has a continuous representative, whose values make sense pointwise in the closure of $U$. When this result is applied to the present case to the matrix-valued function $F=\nabla y$ where $y\in\mathcal A_c(\Omega)$, we have $F\in C^{0,1-2/p}(\overline\Omega;\mathbb R^{2\times 2})$. This implies the continuity of $\det F$ on $\overline\Omega$. For this reason, in the definition \eqref{eq:6}, the qualification of $\det F$ as a positive function holds in a pointwise sense.

  \paragraph{Local and global invertibility} Since $\nabla y$ is continuous, by the inverse-function theorem the condition that $\operatorname{det}\nabla y>0$  implies that $y$ is locally invertible at all points of $\Omega$. As is well known, local invertibility does not guarantee the global invertibility of the deformation $y$. As an example, consider a thin rod which is initially straight, and then is bent so that its ends superpose. To obtain global invertibility of the elements of $\mathcal{A}_c(\Omega)$, we have imposed the condition  $\int_\Omega\operatorname{det}\nabla y\le\operatorname{meas}(y(\Omega))$, which was introduced by Ciarlet and Ne\v{c}as in \cite{ciarletnecas} to ensure injectivity of the deformation. In fact, as one can easily see, this condition is violated whenever distant parts of the body undergo a superposition on a set of positive measure. This inequality improved on a previous result by Ball which stated that a deformation which is locally invertible is also globally invertible if it is invertible at the boundary. One of the  original reasons why the condition appeared first as an  inequality is that the set of deformations that satisfy the Ciarlet and Ne\v{c}as condition is weakly closed in $W^{1,p}(\Omega;\mathbb R^2)$, which is desirable if, when applying the direct method, one deals with weakly convergent sequences. In the present case, the reverse inequality 
  $\int_\Omega\operatorname{det}\nabla y\ge\operatorname{meas}(y(\Omega))$ holds \color{black}as a consequence of the change-of-variables formula\color{black}.

  \paragraph{Uniformly strictly positive determinant} 
A key property of the admissible sets is summarized in the following result.
  \begin{proposition}[Local invertibility, T.J. Healey and S. Kr\"omer \cite{healeykroemer}]\label{thr:1}
  Let $c>0$ be such that $\mathcal A_c(\Omega)\neq\emptyset$. Then there exists $\varepsilon=\varepsilon(c,p,q)>0$ such that
  \begin{equation}\label{eq:55}
  y\in\mathcal A_c(\Omega)\Rightarrow \textrm { det } \nabla y\geq \varepsilon \text { on } \overline{\Omega}.
\end{equation}
\end{proposition}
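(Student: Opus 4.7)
The plan is to argue by contradiction: suppose no such $\varepsilon>0$ exists. Then for every $n\in\mathbb{N}$ one can pick $y_n\in\mathcal{A}_c(\Omega)$ with
\[
\delta_n := \min_{\overline\Omega} \det\nabla y_n \to 0^+,
\]
the minimum being attained because $\det\nabla y_n$ has a continuous representative on the compact set $\overline\Omega$. Since $\|y_n\|_{W^{2,p}(\Omega;\mathbb{R}^2)}\le c$ and $p>2$, the compact embedding of $W^{2,p}(\Omega;\mathbb{R}^2)$ into $C^{1,\alpha}(\overline\Omega;\mathbb{R}^2)$ for any $\alpha<1-2/p$ yields, along a subsequence (not relabeled), a limit $y^*\in W^{2,p}(\Omega;\mathbb{R}^2)$ with $\nabla y_n\to\nabla y^*$ uniformly on $\overline\Omega$; weak lower semicontinuity of the $W^{2,p}$-norm preserves the bound $\|y^*\|_{W^{2,p}}\le c$, while the clamping condition passes to the limit. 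In particular $\det\nabla y_n\to\det\nabla y^*$ uniformly, so $\min_{\overline\Omega}\det\nabla y^*=0$, attained at some point $x^*\in\overline\Omega$.

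The central estimate is a quantitative H\"older bound on $\det\nabla y^*$. By Morrey's inequality the bound $\|y^*\|_{W^{2,p}}\le c$ gives $\nabla y^*\in C^{0,1-2/p}(\overline\Omega;\mathbb{R}^{2\times 2})$ with H\"older seminorm controlled by a constant depending only on $c$ and $p$; because the determinant is a polynomial in the entries of a uniformly bounded matrix, $\det\nabla y^*$ is itself $(1-2/p)$-H\"older continuous, with some constant $C=C(c,p)$. Combined with $\det\nabla y^*(x^*)=0$ this produces
\[
0\le\det\nabla y^*(x)\le C\,|x-x^*|^{1-2/p} \quad\text{for every } x\in\overline\Omega.
\]

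The integrability constraint now yields a contradiction. Uniform convergence $\det\nabla y_n\to\det\nabla y^*$ together with the strict positivity $\det\nabla y_n>0$ implies
\[
\liminf_{n\to\infty}\frac{1}{|\det\nabla y_n(x)|^q}\ge\frac{1}{|\det\nabla y^*(x)|^q}\quad\text{for every } x\in\Omega
\]
(with the convention $1/0=+\infty$). Fatou's lemma then gives
\[
\int_\Omega\frac{dx}{|\det\nabla y^*|^q}\le\liminf_{n\to\infty}\int_\Omega\frac{dx}{|\det\nabla y_n|^q}\le c.
\]
On the other hand, combining the H\"older bound with the Lipschitz regularity of $\partial\Omega$ (which forces $|B_r(x^*)\cap\Omega|\gtrsim r^2$ uniformly in small $r>0$, even for boundary points, via the cone condition), one obtains
\[
\int_{B_r(x^*)\cap\Omega}\frac{dx}{|\det\nabla y^*(x)|^q}\ge C^{-q}\int_{B_r(x^*)\cap\Omega}\frac{dx}{|x-x^*|^{q(1-2/p)}},
\]
and this last integral diverges precisely when $q(1-2/p)\ge 2$, i.e.\ exactly under the hypothesis $q\ge 2p/(p-2)$ of \eqref{eq:15}. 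This is the desired contradiction.

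The main obstacle I anticipate is not any one step but keeping the threshold sharp: the exponent $q=2p/(p-2)$ appears because the singularity $|x-x^*|^{-q(1-2/p)}$ becomes non-integrable in two space dimensions precisely at $q(1-2/p)=2$, so the interplay between Morrey's H\"older exponent and the planar dimension $n=2$ must be tracked carefully. The case $x^*\in\partial\Omega$ is handled uniformly through the Lipschitz cone condition, and the compactness argument only requires the standard fact that $W^{2,p}\hookrightarrow C^{1,\alpha}$ compactly for $\alpha<1-2/p$.
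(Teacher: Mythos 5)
Your proof is correct and hits the right key estimate, but it organizes the argument differently from the paper. The paper first proves Lemma~\ref{prop:1} for a \emph{single} H\"older function $f$ with $f(x_0)=0$, using the rough bound $|f(y)|\le Cr^\alpha$ on a ball $B(x_0,r)$ to get $\int_{B(x_0,r)}|f|^{-q}\gtrsim r^{n-\alpha q}$ and then splitting into the cases $\alpha q>n$ (blow-up) and $\alpha q=n$ (failure of uniform integrability); the uniformity of the lower bound $\hat\varepsilon$ in $f$ is then left to the reader. Proposition~\ref{thr:1} follows by applying that lemma to $f=\det\nabla y$, after noting that $\det$ is locally Lipschitz in $F$ (via \cite[Prop.~2.32]{dacorogna}) so that $\det\nabla y$ inherits the H\"older modulus of $\nabla y$. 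Your route instead negates the uniform statement directly, extracts a $C^{1}$-convergent subsequence $y_n\to y^*$ by the compact Sobolev embedding, passes the $W^{2,p}$ bound and the $L^q$-integrability of $1/\det\nabla y_n$ to the limit (Fatou), and then runs the H\"older-versus-integrability contradiction on the single limit function $y^*$. This is a genuinely cleaner way to obtain the uniformity in $y$ that the paper leaves as an exercise, at the cost of invoking compactness of $W^{2,p}\hookrightarrow C^{1,\alpha}$. One small refinement worth noting: by using the pointwise H\"older bound $|f(x)-f(x^*)|\le C|x-x^*|^{1-2/p}$ rather than the cruder $|f|\le Cr^\alpha$ on the whole ball, your estimate makes the borderline case $q(1-2/p)=2$ diverge logarithmically without the separate uniform-integrability argument the paper needs for $\alpha q=n$. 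Two minor presentational points: for the limit function you should appeal to the continuous (not compact) embedding $W^{2,p}\hookrightarrow C^{1,1-2/p}$ to claim that the H\"older seminorm of $\nabla y^*$ with exponent exactly $1-2/p$ is controlled by $c$; and you should say explicitly that $\det:\mathbb R^{2\times2}\to\mathbb R$ is locally Lipschitz on bounded sets so that $\det\nabla y^*\in C^{0,1-2/p}(\overline\Omega)$, which is the step the paper sources to \cite[Prop.~2.32]{dacorogna}.
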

Besides \cite{healeykroemer}, a proof of Theorem \ref{thr:1} can be found in \cite[Theorem 2.5.3]{kruzikroubicek}. We point out that a key role in the proof is played by the fact that, according to the definition of $\mathcal A_c(\Omega)$ in \eqref{eq:6},
\begin{equation}\label{eq:333}
    y\in\mathcal A_c(\Omega)\quad\Rightarrow\quad \int_\Omega \frac{1}{|\det \nabla y|^q}{\rm d}x_1{\rm d}x_2<c.
  \end{equation}\color{black}
With this fact in mind, Theorem 1 follows through the application of the next lemma, which slightly generalizes  \cite[Thm.~3.1]{healeykroemer}.
\begin{lemma}\label{prop:1}
  Let $n\ge 1$ and $U$ be bounded Lipschitz domain in  $\mathbb R^n$. Let $f$ be a function in $C^{0,\alpha}(\overline U)$ such that $f>0$ a.e. in $U$  and 
  \begin{equation}
    \|f\|_{C^{0,\alpha}(\overline U)}\le {K}, \quad \int_U \frac 1 {|f|^q}{\rm d}x_1{\rm d}x_2\le {K},
  \end{equation}
  for some $K>0$ and $q\ge n/\alpha$. Then
  \begin{equation}
    f\ge \hat\varepsilon(U,\alpha,K\color{black},q)>0\quad \text{in}\quad\overline U,
  \end{equation}
  where the constant $\hat\varepsilon$ depends on the domain $U$ and on the constants $\alpha$, $K$, and $q$, and is independent on the particular function $f$.
\end{lemma}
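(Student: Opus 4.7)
The plan is to argue by contradiction in the spirit of Healey--Kr\"omer, using a shrinking-ball rescaling. Since $f\in C^{0,\alpha}(\overline U)$ and $\overline U$ is compact, the infimum $\delta:=\min_{\overline U}f\ge 0$ is attained at some $x_0\in\overline U$. The goal is to show $\delta\ge \hat\varepsilon(U,\alpha,K,q)>0$, and then the conclusion is immediate because $f\ge\delta$ on $\overline U$.

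Two ingredients drive the proof. First, the H\"older bound $\|f\|_{C^{0,\alpha}(\overline U)}\le K$ yields the pointwise upper estimate
\[
f(x)\ \le\ \delta+K|x-x_0|^\alpha,\qquad x\in\overline U.
\]
Second, since $U$ is a bounded Lipschitz domain, it satisfies a uniform interior cone condition: there exist $r_0,c_0>0$ depending only on $U$ such that for every $x_0\in\overline U$ a truncated cone with vertex $x_0$, height $r_0$, and solid angle of measure at least $c_0$ is contained in $U$. Combining these two facts and passing to polar coordinates around $x_0$ on that cone gives
\[
K\ \ge\ \int_U \frac{1}{|f|^q}\,dx\ \ge\ c_0\int_0^{r_0}\frac{\rho^{n-1}}{(\delta+K\rho^\alpha)^q}\,d\rho.
\]

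Next, I would rescale via $\rho=(\delta/K)^{1/\alpha}s$, which converts the right-hand side into
\[
\frac{c_0}{K^{n/\alpha}}\,\delta^{\,n/\alpha-q}\int_0^{R(\delta)}\frac{s^{n-1}}{(1+s^\alpha)^q}\,ds,\qquad R(\delta)=r_0(K/\delta)^{1/\alpha},
\]
and distinguish two cases according to whether $q>n/\alpha$ or $q=n/\alpha$. If $q>n/\alpha$, the $s$-integrand decays like $s^{\,n-1-\alpha q}$ with exponent $<-1$, so the integral is bounded below by a strictly positive constant as soon as $R(\delta)\ge 1$, while the prefactor $\delta^{\,n/\alpha-q}$ blows up as $\delta\to 0^+$; this contradicts the upper bound $K$ unless $\delta$ is bounded below by an explicit constant depending on $(U,\alpha,K,q)$. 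In the borderline case $q=n/\alpha$ the prefactor equals $1$, but the integrand decays exactly like $s^{-1}$ at infinity, so the integral grows like $(1/\alpha)\log R(\delta)\sim(1/\alpha)\log(1/\delta)$ as $\delta\to 0^+$; the bound $K$ then forces $\log(1/\delta)$ to remain bounded, again yielding a positive lower bound $\hat\varepsilon$.

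The main obstacle, as I see it, is the uniform measure-density step: one needs the interior cone condition to be available at \emph{every} point of $\overline U$ with constants depending only on $U$, which is indeed a standard consequence of Lipschitz regularity of $\partial U$ but requires care at corners. Beyond that, a second point worth stressing is that the critical case $q=n/\alpha$ is precisely the one invoked in the application to $\mathcal A_c(\Omega)$ (where $n=2$, $\alpha=1-2/p$ and $q=2p/(p-2)$ saturate $q=n/\alpha$), so the logarithmic divergence must be retained rather than discarded; this is exactly the ``slight generalization'' of \cite[Thm.~3.1]{healeykroemer} that the lemma advertises. Once these two points are handled, the rest of the argument is a routine rescaling.
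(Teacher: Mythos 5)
Your proof is correct, and it takes a genuinely more quantitative route than the paper. The paper argues by contradiction: it supposes $f$ vanishes at some $x\in\overline U$, uses the H\"older bound to derive $|f(y)|^q\le C|x-y|^{\alpha q}$, and then shows the lower estimate
\[
K\ \ge\ \int_{B(x,r)\cap U}\frac{{\rm d}y}{|f(y)|^q}\ \ge\ C\,r^{\,n-\alpha q}
\]
is incompatible with the integrability hypothesis — either it blows up as $r\to 0$ when $\alpha q>n$, or it contradicts absolute continuity of the integral when $\alpha q=n$. Crucially, the paper then writes ``We invite the interested reader to show that $\varepsilon$ does not depend on a particular $f$,'' so the published argument really only establishes that $f$ cannot vanish, not the uniform lower bound. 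Your proof fills exactly this gap: by introducing $\delta=\min_{\overline U}f$, replacing the H\"older estimate by the sharper pointwise bound $f(x)\le\delta+K|x-x_0|^\alpha$, and rescaling $\rho=(\delta/K)^{1/\alpha}s$, you extract an explicit inequality that $\delta$ must satisfy, yielding a bound $\delta\ge\hat\varepsilon(U,\alpha,K,q)$ with no contradiction step and no need for the reader to fill in uniformity. Your case split $q>n/\alpha$ versus $q=n/\alpha$ mirrors the paper's split $\alpha q>n$ versus $\alpha q=n$, and your use of the uniform interior cone condition corresponds to the paper's appeal to $\operatorname{meas}(B(x,r)\cap U)\ge Cr^n$; both rely on the Lipschitz regularity of $\partial U$ in essentially the same way. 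One small caveat: you write as if the application to $\mathcal A_c(\Omega)$ necessarily \emph{saturates} $q=n/\alpha$, but \eqref{eq:15} only requires $q\ge 2p/(p-2)$, and in the one-dimensional application in Section 4.4 the inequality is in fact strict; this does not affect your argument, which handles both cases, but the claim is slightly overstated. You should also make sure, before scaling, to treat the trivial branch $\delta>Kr_0^\alpha$ separately (there the lower bound is immediate), since the rescaling estimate assumes $R(\delta)=r_0(K/\delta)^{1/\alpha}\ge 1$.
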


\begin{proof}
Assume that the statement of the lemma is false and that there is 
$x\in\bar U$ such that $f(x)=0$.
Assume first that $x\in U$. 
We have that 
$$ |f(x)-f(y)|\le C|x-y|^\alpha$$
for every $y\in \bar U$. In other words, 
$|f(y)|^q\le C|x-y|^{\alpha q}$ because $f(x)=0$. 
Consequently, taking $B(x,r)\subset U$
\begin{align*}
    K\ge\int_U\frac{1}{|f(y)|^q}{\rm d}y\ge \int_{B(x,r)}\frac{1}{|f(y)|^q}{\rm d}y\\
    \ge \int_{B(x,r)}\frac{1}{C^qr^{\alpha q}}{\rm d}y
    =C{\rm meas}(B(0,1))r^{n-\alpha q}\ .
\end{align*}
If $q\alpha>n$ the last term tends to infinity for $r\to 0$ which gives the contradiction.
If $q\alpha=n$ then we have that 
$$
\int_{B(x,r)}\frac{1}{|f(y)|^q}{\rm d}y\ge C{\rm meas}(B(0,1))$$
for every $B(x,r)\subset U$ and this contradict uniform integrability of $\frac{1}{|f|^q}$. 
If $x\in\partial U$ then we proceed similarly taking into account that the Lipschitz property of $U$ implies that ${\rm meas}(B(x,r)\cap U)\ge Cr^n
$ for some $C>0$. We invite the interested reader to show that $\varepsilon$ does not depend on a particular $f$.
\end{proof}
To see how Lemma \ref{prop:1} applies to prove Theorem \ref{thr:1}, we use Morrey's inequality, applied to $\nabla y$, to obtain the implication
\begin{equation}\label{eq:56}
  y\in\mathcal A_c(\Omega)\quad\Rightarrow\quad\ \nabla y \in C^{0,1-2/p}(\overline\Omega;\R^{2\times 2})\ .
\end{equation}

Since $F\mapsto \det F$ is a separately convex function complying (in dimension 2) with the bound $|\det F|\le |F|^2$, we can apply \cite[Prop. 2.32]{dacorogna} to obtain
\begin{equation}
    |\det F(x_1)-\det F(x_2)|\le C(1+|F(x_1)|+|F(x_2)|)|F(x_1)-F(x_2)|,\qquad\text{for all }x_1,x_2\in\overline\Omega,
\end{equation}
 which implies H\"{o}lder continuity of $\det \nabla y$ and Lemma~\ref{prop:1} applies with $f=\det\nabla y$.
\begin{remark}\label{rem:22}
  When reading the statement of Lemma \ref{prop:1} one may wonder why the exponent $q$ should be greater than $n/\alpha$ and why the domain $\Omega$ is required to have the cone property. Concerning the exponent, suppose that $f$ vanishes at a point $x_0\in\overline U$, then the bound on its $C^{0,\alpha}$ norm would limit the growth $|f|$ in a sufficiently small neighborhood of $x_0$; as a result, the singularity of $1/{|f|^q}$ at $x_0$ would not be integrable for $q>n/\alpha$. This argument requires that the density of the set $U$ be positive at all points of the closure of $U$, which is guaranteed if $U$ has the cone property (i.e., when $\Omega$ is Lipschitz) but not, for instance, when $U$ has a cusp.
\end{remark}

\subsection{Assumptions}
We now state assumptions that: 1) guarantee that the functional $\mathcal E$ given by \eqref{eq:3} is well defined in the admissible space $\mathcal A_c(\Omega)$ for some positive $c$\color{black}; 2) render the functional coercive, \BBB which is one of the ingredients we will use to prove the existence of minimizers.\BBB 

First of all, we require the following:
\begin{align}
  a:\Omega\to\mathbb R^2 \textrm{ is measurable and }|a(x)|=1\textrm{ for a.e. }x\in\Omega.\label{eq:17}
\end{align}
Next, we ask the following conditions on $M$ and $h$:
\begin{subequations}\label{eq:2}
\begin{align}
  &M\in L^\infty(\Omega),\label{eq:7}\\
  &h:\mathbb R^2\to\mathbb R^2 \textrm{ is continuous and bounded}.\label{eq:18}
\end{align}
\end{subequations}
Finally, we impose the following coercivity condition\GGG{s}\BBB
  \begin{equation}\label{eq:13}
    \begin{aligned}
  &\exists c,C>0
      \text{ such that }W_{e}(F ; a \otimes a) \ge \displaystyle{c\left(|F|^{p}+\frac{1}{\left.|\det F\right|
      ^{q}}\right)-C}\quad\forall F\in\mathbb R^{2\times 2}:\
      \det F>0\!,
\\
& W_{e}(F ; a \otimes a) =+\infty \text{ if } \det F\le 0\!.
\end{aligned}
\end{equation}
These conditions guarantee that if $\mathcal E(y)$ is finite then $y\in\mathcal A_c(\Omega)$ for some $c>0$. Beside the coercivity condition, we also  impose a growth condition when $F$ has large norm and \emph{non-degenerate determinant}:
\begin{equation}\label{eq:54}
  \exists C_1>c\text{ such that $\forall \varepsilon>0$ }:W_{\rm e}(F;a\otimes a)\le C_1(|F|^p+1/\varepsilon^q)\quad\forall F\in\mathbb R^{2\times 2}:\det F \ge \varepsilon>0.
\end{equation}

We conclude this section with some remarks concerning the motivation for the above assumptions.

\begin{remark}
Assumption \eqref{eq:17} guarantees that  $W(\cdot,F)$ is a measurable function for every $F$ fixed, and that $W(x,\cdot)$ is continuous for a.e. $x\in\Omega$ \BBB (this is proved in the next section). \color{black}Accordingly, the integrand $W(\cdot,\cdot)$ in \eqref{eq:9} is a Caratheodory integrand. This is a typical requirement in variational problems. Assumptions \eqref{eq:7} and \eqref{eq:8} play a similar role, since they ensure that for $y\in\mathbb R^2$ and $F\in\mathbb R^2$ fixed the function $x\mapsto\mathfrak M(x,y,F)$ is measurable. In particular, we require that $h$ be continuous to guarantee that the composition $h\circ y$ with the measurable function $y$ is measurable. The requirement that $M$ and $h$ be bounded ensures that the integrand $\mathfrak M(x,y(x),\nabla y(x)$ besides being measurable, is bounded from below (see \eqref{eq:57} in the foregoing part of the paper). This fact, together with the fact that $W(x,F)$ is bounded from below, guarantees that the negative part of the argument of the integral in the definition of the energy is finite.
\end{remark}

\begin{remark}
  The growth condition \eqref{eq:54} shall be invoked in Step 3 of the proof of Theorem \ref{main-thm} below (see in particular the bounds \eqref{eq:45} and \eqref{eq:77}). It is similar to the condition (2.3) of \cite{bhatta1999}.
  \end{remark}

\begin{remark}
  The coercivity  assumptions \eqref{eq:13} on the energy $W$ along with the boundary condition on $\Gamma$ imply that sequences of deformations $y_k$ with uniformly bounded energy are uniformly bounded in $W^{1,p}(\Omega;\mathbb R^2)$. The extra regularizing term yields indeed a bound in $W^{2,p}(\Omega;\mathbb R^2)$. The fact that exponent $p$ is greater than 2, the number of dimensions, entails that the deformation gradients $F_k=\nabla y_k$ of the above sequence are also bounded in $C^{0,\lambda}(\overline\Omega;\mathbb R^{2\times 2})$ with $\lambda=1-2/p$. This fact is important because, when applying the direct method of the calculus of variations, we can pass to the limit in the integrand $W(x,F_k)$.
\end{remark}\color{black}

\begin{remark}\label{rem:2}
     The limit passage in the magnetic interaction energy $\mathfrak M$ (see \eqref{eq:53}) is more delicate, compared to that in the strain energy $W$. This is because the function $F\mapsto \mathfrak M(x,y,F)$ is not continuous and bounded, having  singularity at $F=0$ (note however that $F=0$ is not admissible). The property \eqref{eq:55} alleviates this problem, since it guarantees that $|Fa|$ is uniformly bounded by a positive constant for all $F=\nabla y$ with  $y\in\mathcal A_c(\Omega)$. This allows us to replace $F\mapsto \mathfrak M(x,y,F)$ with any extension which is continuous at $F=0$. To see this point, let $c$ be fixed, and consider any $y\in\mathcal A_c(\Omega)$. Define $F=\nabla y$ and let $F=RU$ be the polar decomposition of $F$. Since $U$ is symmetric, positive-definite it can be diagonalized. We denote by $\lambda_1$ and $\lambda_2$ the eigenvalues of $U$, and we sort them so that $\lambda_1\le \lambda_2$. Then $|Fa|=|RUa|=|Ua|\ge \lambda_1$. Moreover, $\lambda_1=\det F/\lambda_2\ge \varepsilon/\lambda_2$, where the last inequality follows from from Prop. \ref{thr:1}. \color{black}On the other hand $\lambda_2^2\le\lambda_1^2+\lambda_2^2=|U|^2=|F|^2\stackrel{\eqref{eq:56}}\le C$. Thus we conclude that $|Fa|\ge C$, as desired.
   \end{remark}
\begin{proposition}\label{prop:existence}
 Let \eqref{eq:17}-\eqref{eq:13} hold. Then $\mathcal E$ given in \eqref{eq:3} has a minimizer on $\mathcal{A}(\Omega)$.
\end{proposition}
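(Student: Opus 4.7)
The plan is to apply the direct method of the calculus of variations. First I would observe that the admissible set is non-empty: the identity map $y(x)=x$ satisfies the clamping condition, has $\det\nabla y\equiv 1$, and produces a finite energy under \eqref{eq:7}--\eqref{eq:18} and \eqref{eq:54}, so it belongs to $\mathcal A_c(\Omega)$ for $c$ large enough. Next I would note that $\mathcal E$ is bounded from below on $\mathcal A_c(\Omega)$: the second-gradient term is non-negative, the Zeeman density is bounded below by $-\|M\|_{L^\infty}\|h\|_{L^\infty}$ because the unit vector $Fa/|Fa|$ has norm one, and \eqref{eq:13} ensures $W\ge -C$. Taking a minimizing sequence $(y_k)$, the coercivity part of \eqref{eq:13} together with the boundary condition on $\Gamma$ (Poincar\'e inequality) yields a uniform bound $\|y_k\|_{W^{2,p}(\Omega;\mathbb R^2)}\le C'$ and $\int_\Omega |\det\nabla y_k|^{-q}\,\mathrm dx\le C'$.

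By reflexivity I would extract (without relabelling) a subsequence with $y_k\rightharpoonup y$ weakly in $W^{2,p}(\Omega;\mathbb R^2)$. Since $p>2$, the Rellich--Kondrachov embedding $W^{2,p}(\Omega)\hookrightarrow C^1(\overline\Omega)$ is compact, so $y_k\to y$ in $C^1(\overline\Omega;\mathbb R^2)$ and in particular $\nabla y_k\to\nabla y$ and $\det\nabla y_k\to\det\nabla y$ uniformly on $\overline\Omega$. The trace $y=x$ on $\Gamma$ is preserved, and the Ciarlet--Ne\v{c}as inequality is weakly closed in $W^{1,p}(\Omega;\mathbb R^2)$ (see \cite{ciarletnecas}). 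Crucially, Proposition \ref{thr:1} delivers a single $\varepsilon>0$ with $\det\nabla y_k\ge\varepsilon$ on $\overline\Omega$ uniformly in $k$; the uniform limit therefore satisfies $\det\nabla y\ge\varepsilon>0$ pointwise, and Fatou upgrades the uniform $L^q$ bound on $1/\det$ to the limit. Combined with weak lower semicontinuity of the $W^{2,p}$ norm, this shows $y\in\mathcal A_c(\Omega)$ for a suitable $c$.

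To conclude, I would prove $\mathcal E(y)\le\liminf_k\mathcal E(y_k)$ by treating the three contributions separately. The regularization $\int_\Omega|\nabla^2 y|^p\,\mathrm dx$ is weakly lower semicontinuous on $W^{2,p}$ by convexity of $|\cdot|^p$. The strain-energy term is in fact continuous along $(y_k)$: $W_e(\cdot,a(x)\otimes a(x))$ is continuous, $\nabla y_k\to\nabla y$ uniformly, and the growth estimate \eqref{eq:54} evaluated at the uniform threshold $\varepsilon$ furnishes a uniform integrable majorant, so dominated convergence applies. For the Zeeman term, the argument sketched in Remark \ref{rem:2} gives $|\nabla y_k a|\ge c_0>0$ uniformly, which desingularises the quotient $\nabla y_k a/|\nabla y_k a|$ at $F=0$; since $y_k\to y$ uniformly and $h$ is continuous and bounded, the integrand $\mathfrak M(\cdot,y_k,\nabla y_k)$ converges pointwise a.e.\ and stays uniformly bounded, so dominated convergence again applies.

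The central obstacle, as in essentially every existence result for ``non-simple'' materials with orientation-preserving constraints, is propagating the pointwise positivity of the Jacobian to the weak limit and simultaneously taming the singular Zeeman integrand. Both difficulties are resolved at once by Proposition \ref{thr:1}, which itself rests on the H\"older regularity of $\nabla y$ extracted from the second-gradient penalisation. Without the regularisation $\mu\ell^p|\nabla^2y|^p$ neither the limit passage in $\mathfrak M$ nor the weak closedness of the admissible class would be within reach by the direct method.
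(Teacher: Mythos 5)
Your proposal is correct and follows essentially the same route as the paper: direct method, compactness in $W^{2,p}$ with strong $C^1$ convergence of gradients, Proposition \ref{thr:1} to obtain the uniform lower bound on the Jacobian, Ciarlet--Ne\v{c}as closedness, and then weak lower semicontinuity of the regularizer combined with bounded (dominated) convergence for the strain and Zeeman terms using \eqref{eq:54} and Remark \ref{rem:2}. The only cosmetic difference is that you invoke Fatou to pass the $L^q$ bound on $1/\det\nabla y$ to the limit, whereas the paper deduces it directly from the uniform lower bound $\det\nabla y\ge\varepsilon$; both are fine.
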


\subsection{Well posedness}
Under the assumptions \eqref{eq:17}-\eqref{eq:13}, we show that if $y$ belongs to the admissible set $\mathcal A_c(\Omega)$ for some $c>0$, that is to say, $y\in\mathcal A(\Omega)$ where
\begin{equation}
  \mathcal A(\Omega)=\bigcup_{c>0}\mathcal A_c(\Omega),
\end{equation}
then the integral on the right-hand side in the definition \eqref{eq:3} of $\mathcal E(y)$ is well-defined.
\smallskip

\noindent We divide the proof in two steps. In the first step we check that the integrand in \eqref{eq:3} is measurable. In the second step we argue that the integral makes sense, according to the standard definition (see \emph{e.g.} \cite[\S4.7]{dibenedetto}).
\medskip

\emph{Step 1. Measurability.} Suppose that $y\in\mathcal A(\Omega)$. We are going to show that the function $x\mapsto W(x,F(x)\color{black})+\mathfrak M(x,y(x),F(x)\color{black})$ is measurable, where, as usual, we use the shorthand notation $F=\nabla y$.

As a start, recalling the definition \eqref{eq:9}, we notice that, since $W_{\rm e}$ is a continuous function, since $F$ is measurable by assumption, and since $a$ is measurable by $\eqref{eq:17}$, the composite function $x\mapsto W(x,F(x))=W_e(F(x),a(x)\otimes a(x))$ is measurable. It remains for us to check that the function
\begin{equation}\label{eq:4}
  x\mapsto \mathfrak M(x,y(x),F(x))=-M(x)\color{black}h(y(x))\cdot\frac{F(x)a(x)}{|F(x)a(x)|}
\end{equation}
is measurable. Indeed, since $h$ is a continuous function, it is Borel measurable; thus, the composition $h\circ y$ is Lebesgue measurable. By \eqref{eq:17} we have $(h\circ y)\cdot Fa$ is measurable; concerning the remaining term $|Fa|^{-1}$, we observe (\emph{cf.} Remark ) that $|Fa|>0$ a.e\color{black}. Then it follows that $|Fa|^{-1}$ is measurable (\cite[Chap. 4, Prop. 1.2]{dibenedetto}).

\smallskip

\emph{Step 2. Boundedness from below\color{black}.} Once we know that the integrand in \eqref{eq:3} is measurable, its integrability is guaranteed if either its positive or its negative part have finite integral according to the definition in \S4.7 of \cite{dibenedetto}. This is indeed the case. In fact, since $h$ and $M$ have been assumed to be bounded (\emph{cf.} \eqref{eq:2}), the function in \eqref{eq:4}
\BBB is bounded, that is,
\begin{equation}\label{eq:57}
  \BBB|\mathfrak M(x,y(x),\nabla y(x))|\le C\quad\text{for a.e. }x\in\Omega,
\end{equation}
for some constant $C>0$. Likewise, the function $x\mapsto W(x,F(x))$ is bounded from below.
Thus the integral of the negative part of the right-hand side of \eqref{eq:3} is finite. Accordingly, the entire integral is well defined (the integral may be in general infinite, unless we enforce a growth conditions on $W_e$).

\subsection{Existence of minimizers}\label{sec:existence-minimizers}
\BBB We split the proof of Proposition~\ref{prop:existence} in several steps.
\medskip

  \emph{Step 1. Infimizing sequence.}

  Let $y_0(x)=x$ for all $x\in\Omega$ denote the trivial deformation. Then $y_0\in\mathcal A(\Omega)$ and in particular $\mathcal E(y_0)<+\infty$. This implies that $m=\operatorname{inf}_{y\in \mathcal A(\Omega)}\mathcal E(y)<+\infty$. Thus, there exists sequence of deformations $y_k\in \mathcal A(\Omega)$ such that $\mathcal E(y_k)\to m$ as $k\to\infty$. \BBB Without loss of generality we can assume that the sequence $\mathcal E(y_k)$ of energies is monotone decreasing, with
  \begin{equation}\label{eq:58}
    \mathcal E(y_k)\le C
  \end{equation}
  for some constant $C$.

  \medskip

\emph{Step 2. Compactness.}
\BBB Using, in the order, the definition of $\mathcal E$ in \eqref{eq:3}, the definition of $W$ in terms of  $W_e$ given in \eqref{eq:9}, the coercivity of $W_e$ in \eqref{eq:13}, and the boundedness of $\mathfrak M$ in \eqref{eq:57}, we obtain\color{black}
\begin{equation}\label{eq:60}
  \|\nabla y_k\|^p_{L^p(\Omega;\mathbb R^{2\times 2})}\le C_1\mathcal E(y_k)-C_2,
\end{equation}
\BBB where $C_1$ and $C_2$ are positive constants. \color{black}Moreover, since
 $y_k(x)=x$ on a set of positive length (\emph{cf.} \eqref{eq:6}) \color{black}
it follows from the Poincar\'e's inequality that
 \begin{equation}\label{eq:8}
       \|y_k\|_{L^{p}\left(\Omega, \mathbb{R}^{2}\right)} \le C \|\nabla y_k\|_{W^{1,p}(\Omega;\mathbb R^{2\times 2})}.
      \end{equation}
Clearly, we have also
\begin{equation}\label{eq:59}
  \|\nabla\nabla y_k\|^p_{L^p(\Omega;\mathbb R^{2\times 2\times 2})}\le C_1\mathcal E(y_k)-C_2.
\end{equation}
By putting together \eqref{eq:58}--\eqref{eq:59} we conclude that the sequence $y_k$ is uniformly bounded in $W^{2,p}(\Omega;\mathbb R^2)$ and hence there exists a subsequence such that (following standard practice we do not relabel subsequences):\color{black}
\begin{equation}\label{eq:1004}
  y_k\rightharpoonup y\quad\text{weakly in }W^{2,p}(\Omega;\mathbb R^2).
\end{equation}\color{black}
Furthermore, by the Morrey embedding theorem $W^{2,p}(\Omega;\mathbb R^2)$ is compactly embedded in $C^{1,1-2/p}(\overline\Omega;\mathbb R^2)$. Thus, by further extracting a subsequence we have:
    \begin{equation}\label{eq:1}
      y_n\to y\qquad\textrm{strongly in }C^{1,1-2/p}(\overline\Omega;\mathbb R^2).
    \end{equation}
    \emph{Step 2: Admissibility of the limit.} We next verify that $y\in\mathcal A(\Omega)$. To begin with, we observe that
    \begin{equation}\label{eq:1005}
        y(x)=x\quad\text{for all }x\in\Gamma,
    \end{equation}
    since   $y_k(x)=x$ for all $x$ in $\Gamma$ and since $y_k$ converges pointwise to $y$ in $\overline\Omega$. \color{black}
    We henceforth use the notation
    \begin{equation}
      F=\color{black}\nabla y,\qquad F_k=\color{black}\nabla y_k.
    \end{equation}
By \eqref{eq:9}, \eqref{eq:13}, \eqref{eq:3}, and \eqref{eq:57} there exist positive constants $C_1$ and $C_2$ such that
    \begin{equation}
        \int_\Omega \frac {1} {|\det F_k|^q}\operatorname{d}x\le C_1\int_\Omega W(x,y_k(x))\le C_1 \mathcal E(y_k)-C_2.
    \end{equation}
    Thus, in view of \eqref{eq:3}, and by the definition of $\mathcal A_c(\Omega)$ in \eqref{eq:6}, we conclude $y_k\in\mathcal A_c(\Omega)$ for some $c$ fixed. As a result, we can apply Proposition \ref{thr:1} to obtain\color{black}
    \begin{equation}\label{eq:1001}
        \det F_k\ge \varepsilon,
    \end{equation}
    where $\varepsilon$ is positive constant. \color{black}By \eqref{eq:1} the sequence $F_k$ converges uniformly to $F$ in $\overline\Omega$. Thus, also $\det F_{k}\to\det F$ uniformly in $\overline\Omega$. In view of \eqref{eq:1001} this implies,
    \begin{equation}\label{eq:1003}
      \det F\ge \varepsilon\qquad\textrm{ in }\overline\Omega.
    \end{equation}
    The last inequality has the consequence that
\begin{equation}\label{eq:1002}
\int_{\Omega} \frac{1}{| \operatorname{det} F|^{q}}{\rm d}x_1{\rm d}x_2<+\infty.
\end{equation}
By the same token,
$$
\int_{\Omega} \det F{\,\rm d^2}x=\lim _{k \rightarrow\infty} \int_{\Omega} \det F_k\, {\,\rm d}^2x=\lim _{k \rightarrow \infty} \operatorname{meas}\left(y_{k}(\Omega)\right).
$$
On the other hand, a\color{black}n argument used in the proof of Theorem 5 of \cite{ciarletnecas} shows that
$$
\lim _{k \rightarrow \infty} \operatorname{meas}\left(y_{k}(\Omega)\right) \leq \operatorname{meas}(y(\Omega)).
$$
Thus, we conclude that
\begin{equation}\label{eq:25}
\int_{\Omega} \operatorname{det} F \,{\rm d}x_1{\rm d}x_2 \leqslant \operatorname{meas}(y(\Omega)).
\end{equation}
Recalling that $y\in W^{2,p}(\Omega;\mathbb R^2)$ by \eqref{eq:1004}, and by putting together \eqref{eq:1005}, \eqref{eq:1002}, and \eqref{eq:1003}, we obtain $y\in\mathcal A(\Omega)$, as desired. \color{black}

\emph{Step 3. Weak lower semicontinuity.} Having established that $y\in\mathcal A(\Omega)$, it remains for us to prove that
\begin{equation}\label{eq:1009}
  \mathcal E(y)\le\liminf_{k\to\infty}\mathcal E(y_k),
\end{equation}
which confirms that the infimum of the minimizing sequence is attained.

First observe that since $F_k\to F$ in $C^{0,1-2/p}(\overline\Omega;\mathbb R^2)$ with $\det F_k\ge \varepsilon>0$ (\emph{cf.} \eqref{eq:1} and \eqref{eq:1001}) we have that $W_e(F_k,a\otimes a)$ converges uniformly to $W_e(F,a\otimes a)$ in $\overline\Omega$, by the continuity of $W_e$. Accordingly, by bounded convergence,
\begin{equation}\label{eq:1008}
\begin{aligned}
    \int_\Omega W(x,F)\operatorname{d}^2x&=\int_\Omega W_e(F(x),a(x)\otimes a(x))\operatorname{d}^2x\\
    &=\lim_{k\to\infty}\int_\Omega W(F_k(x),a(x)\otimes a(x))\operatorname{d}^2x\\
    &=\lim_{k\to\infty}\int_\Omega W(x,F_k)\operatorname{d}^2x.
\end{aligned}
\end{equation}
Moreover, by the lower semicontinuity of the norm we have
\begin{equation}\label{eq:1007}
    \int_\Omega \mu\ell^p |\nabla\nabla y|^p{\rm d}x_1{\rm d}x_2\le\liminf_{k\to\infty}\int_\Omega \mu\ell^p |\nabla\nabla y|^p{\rm d}x_1{\rm d}x_2.
\end{equation}
Finally, we observe that the functions $F_ka$ convergence uniformly to $F a$ in $\overline\Omega$. Therefore, also $|F_ka|\to |Fa|$ uniformly.
 Moreover, as observed in Remark \ref{rem:2}, the bound \eqref{eq:1003} entails that $|F a|>0$.  As a result, we have the convergence
\begin{align}\label{eq:1006}
  \int_\Omega\mathfrak M(x,y,\nabla y)&=\int_\Omega (h\circ y)\cdot\frac{F a}{|F a|}{\rm d}x_1{\rm d}x_2=\lim_{k\to\infty} \int_\Omega (h\circ y_k)\cdot\frac{F_ka}{|F_ka|}{\rm d}x_1{\rm d}x_2\nonumber \\
  &=\lim_{k\to\infty}\int_\Omega\mathfrak M(x,y_k,\nabla y_k)\ .
\end{align}
By combining \eqref{eq:1008},\eqref{eq:1007}, and \eqref{eq:1006}, we obtain \eqref{eq:1009}.
\color{black}

\begin{remark}
 The regularizing term in the energy \eqref{eq:3} turns out to be a key ingredient in the proof of existence of a minimizer. In fact, in the absence of the regularizing energy the standard existence result for the equilibrium problem in nonlinear elastostatics \cite{ball1977} cannot be applied, since Ball's result requires the strain energy to be polyconvex. As illustrated by a counterexample in the Appendix, the magnetic density $\mathfrak M(x,y,\cdot)$ is not even rank-one convex, a property implied by polyconvexity (see for example \cite[Thm 5.3]{dacorogna}). 
  \end{remark}

\begin{remark}\label{rem:010}
  While the function $W$ is a Carath\'eodory integrand,  the function $\mathfrak M$ is not, since the mapping $F\mapsto Fa/|Fa|$ is singular at $F=0$\color{black}.  Thus, the standard lower semicontinuity arguments, such as those in \cite[Theorem 3.3.1]{kruzikroubicek} do not apply. On the other hand, thanks the coercivity condition \eqref{eq:13}, we can apply Proposition \ref{thr:1}, which ensures that  $\det F\ge\varepsilon>0$, i.e., the determinant is bounded from below by a positive (although small constant). In turns, this guarantees that the denominator $|Fa|$ stays above zero, and hence the singularity of $F\mapsto Fa/|Fa|$ at $F=0$ can be removed.\color{black}
  \end{remark}

  \begin{remark}
  The entire treatment of the existence result could have been done replacing the set $\mathcal A(\Omega)$ with the set
  \begin{equation}
      \mathcal A'(\Omega)=\{y\in W^{2,p}(\Omega;\mathbb R^3):y(x)=x\text{ on }\Gamma,\det\nabla y>0\text{ a.e. in }\Omega\}.
  \end{equation}
  In fact, if $y\in\mathcal A'(\Omega)$ then integral on the right-hand side of the definition  \eqref{eq:3} of the energy $\mathcal E(y)$ makes sense. If this integral is bounded by some constant $M$, then we have that $y\in\mathcal A_c(\Omega)$ for some $c>0$ that depends on $M$, which in turn implies that the determinant of $\nabla y$ is bounded from below by a positive constant that depends on $M$ but not on $y$.
\end{remark}

  \section{Rigorous dimension reduction}
  \subsection{Setup}
\paragraph{Geometry} We consider a family of thin bodies having the shape of a rectangle of length $\ell$ and thickness $t$. We can obtain these domains by considering a reference rectangle of unit height
\begin{equation}
  \widetilde\Omega=\left\{\left(x_{1}, x_{2}\right)\in\mathbb R^2: 0<x_{1}<\ell,-1/{2}<x_{2}<1/ 2\right\},
\end{equation}
and by introducing the projection map $\pi_t:\widetilde\Omega\to\Omega_t$ 
\begin{equation}
  \pi_t(x)=(x_1,tx_2),
\end{equation}
hence \BBB
\begin{equation}\label{eq:63}
\Omega_{t}=\pi_t(\widetilde\Omega)=(0,\ell)\times(-t/2,+t/2)\ .
\end{equation}
 \color{black}The symbol $t$, which stands for  ``thickness", is being used here in place of the more conventional symbol $h$, which we have used for the magnetic field.\color{black}

\paragraph{Constraints} We require that, in order to be admissible, a deformation $y_t$ be  equal to the identity on the set
$$
\Gamma_t=\pi_t(\widetilde\Gamma),\qquad \widetilde\Gamma=\{(0,x_2)\in\mathbb R^2:-1/2<x_2<+1/2\}\subset\partial\widetilde\Omega.
$$
that is, we impose
$$
y_t(x)=x\qquad\text{on }\Gamma_t.
$$
Here $y_t$ lives in the following class of \emph{admissible maps}
\begin{equation}\label{eq:6b}
  \widehat{\mathcal{A}}_c(\Omega_t)\equiv\left\{{y}_t \in W^{2, p}\left(\Omega_t, \mathbb{R}^{2}\right)
    \textrm{ such that } \left.\begin{array}{l}
y_t(x)=x\text{ on }\Gamma_t,\\
 \displaystyle \|y_t\|_{W^{2,p}(\Omega_t;\mathbb R^2)}+\int_{\Omega_t} \frac 1 {|\det\nabla y_t|^q}{\rm d}x_1{\rm d}x_2\le c,
\end{array}\right.\right\}
\end{equation}

\paragraph{Energy} For each $t$, we define the energy functional of the body $\Omega_t$ by
\begin{equation}\label{eq:12}
  \mathcal E_t(y_t)=\displaystyle\int_0^\ell \int_{-t/2}^{t/2} W(x_1,\nabla y_t)+\mu\ell^p|\nabla^{2} y_t|^p+\mathfrak{M}(x,y_t,\nabla y_t){\rm d}x_1{\rm d}x_2.
  \end{equation}
\begin{remark}{\rm
Here we are assuming the strain energy $W(\cdot,F)$ to be independent on the coordinate $x_2$. This assumption is not restrictive and may be relaxed, but its removal would provide little advantage at the cost of the introduction of additional technicalities.}
\end{remark}

\begin{figure}[H]
\begin{center}
\def\svgwidth{0.5\textwidth}
\begingroup%
  \makeatletter%
  \providecommand\color[2][]{%
    \errmessage{(Inkscape) Color is used for the text in Inkscape, but the package 'color.sty' is not loaded}%
    \renewcommand\color[2][]{}%
  }%
  \providecommand\transparent[1]{%
    \errmessage{(Inkscape) Transparency is used (non-zero) for the text in Inkscape, but the package 'transparent.sty' is not loaded}%
    \renewcommand\transparent[1]{}%
  }%
  \providecommand\rotatebox[2]{#2}%
  \newcommand*\fsize{\dimexpr\f@size pt\relax}%
  \newcommand*\lineheight[1]{\fontsize{\fsize}{#1\fsize}\selectfont}%
  \ifx\svgwidth\undefined%
    \setlength{\unitlength}{271.80969479bp}%
    \ifx\svgscale\undefined%
      \relax%
    \else%
      \setlength{\unitlength}{\unitlength * \real{\svgscale}}%
    \fi%
  \else%
    \setlength{\unitlength}{\svgwidth}%
  \fi%
  \global\let\svgwidth\undefined%
  \global\let\svgscale\undefined%
  \makeatother%
  \begin{picture}(1,0.7939167)%
    \lineheight{1}%
    \setlength\tabcolsep{0pt}%
    \put(0,0){\includegraphics[width=\unitlength,page=1]{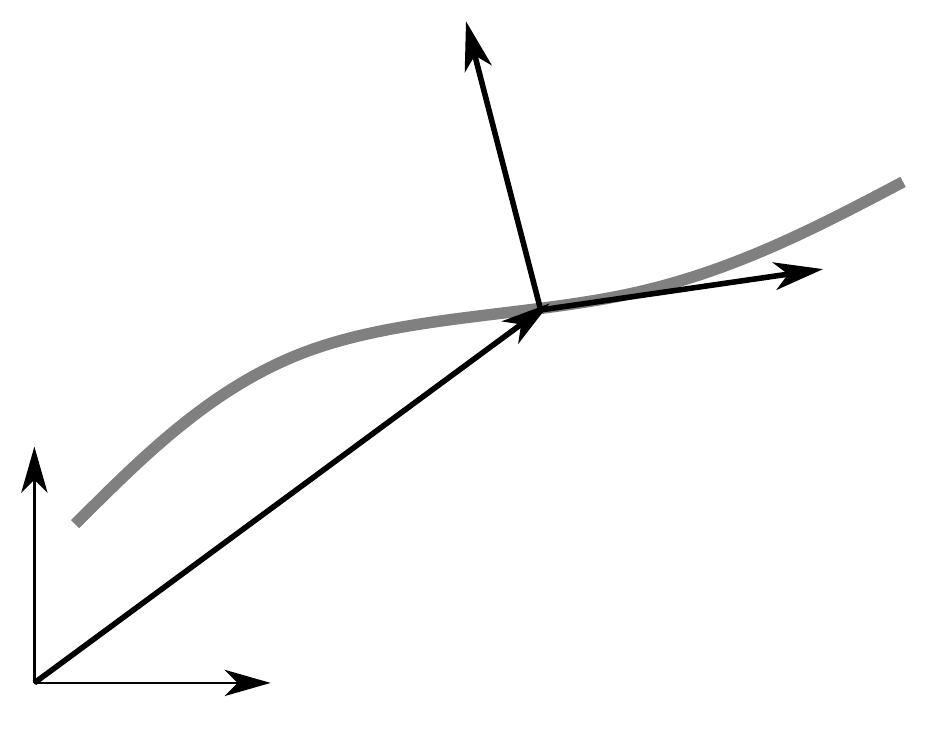}}%
    \put(0.29800895,0.07031448){\makebox(0,0)[lt]{\lineheight{1.25}\smash{\begin{tabular}[t]{l}$e_1$\end{tabular}}}}%
    \put(0.02294668,0.3404244){\makebox(0,0)[lt]{\lineheight{1.25}\smash{\begin{tabular}[t]{l}$e_2$\end{tabular}}}}%
    \put(-0.00131137,0.00774246){\makebox(0,0)[lt]{\lineheight{1.25}\smash{\begin{tabular}[t]{l}$\mathcal{O}$\end{tabular}}}}%
    \put(0.2263721,0.42200067){\makebox(0,0)[lt]{\lineheight{1.25}\smash{\begin{tabular}[t]{l}$z$\end{tabular}}}}%
    \put(0.58504914,0.40970544){\makebox(0,0)[lt]{\lineheight{1.25}\smash{\begin{tabular}[t]{l}$z(x_1)$\end{tabular}}}}%
    \put(0.85156311,0.44980135){\makebox(0,0)[lt]{\lineheight{1.25}\smash{\begin{tabular}[t]{l}$z'(x_1)$\end{tabular}}}}%
    \put(0.53274265,0.76515626){\makebox(0,0)[lt]{\lineheight{1.25}\smash{\begin{tabular}[t]{l}$b(x_1)$\end{tabular}}}}%
    \put(0,0){\includegraphics[width=\unitlength,page=2]{vector_fields.pdf}}%
    \put(0.73617061,0.6455986){\makebox(0,0)[lt]{\lineheight{1.25}\smash{\begin{tabular}[t]{l}$a(x_1)$\end{tabular}}}}%
  \end{picture}%
\endgroup%

\end{center}
\caption{Deformed configuration of the rod. The vector $z(x_1)$ contains the coordinates of the current position of point $x_1$, while $b(x_1)$ represents the cross section that contains $x_1$. The orientation of the magnetic fibers is given by $a(x_1)$.}
\label{vectorfields}
\end{figure}

\subsection{Statement of the the convergence result}\label{sec:convergence_result}
Having defined the equilibrium problem for a generic rectangular strip $\Omega_t$ of thickness $t$ (see \eqref{eq:63}), we choose for each $t$ a minimizer:
    \begin{equation}\label{eq:14}
      y_t\in\mathop{\operatorname{argmin}}_{\widehat{\mathcal A}(\Omega_t)}\mathcal E_t
    \end{equation}
    taken over the set $\widehat{\mathcal A}(\Omega_t)=\bigcup_{c>0}\widehat{\mathcal A}_c(\Omega_t)$.

    We would like to capture the asymptotic behavior of $y_t$ as $t\to 0$. To this effect, we need a way to compare deformations over different  domains. A convenient way to do so is to consider the thickness averages
 \begin{equation}\label{eq:33}
    z_t(x_1)=\frac 1 t \int_{-t/2}^{+t/2}y_t(x_1,x_2){\rm d}x_2,
  \end{equation}
  which are defined on the same domain $(0,\ell)$ \emph{independently of the thickness}. It turns out, however, that thickness averages alone do not allow to capture the asymptotic behavior of the deformation, and that in addition we need to record the following functions
 \begin{equation}\label{eq:33b}
    b_t(x_1)=\frac 1 t \int_{-t/2}^{+t/2}\partial_2 y_t(x_1,x_2){\rm d}x_2,
  \end{equation}
  which represent the average over the thickness of the image of a material fiber parallel to vector $e_2=(0,1)^T$. The convergence result can be stated rigorously in the next proposition. In that statement, we make reference to a generic sequence  $\{t_n\}_{n\in\mathbb N}$ such that $t_n\to 0$ as $n\to\infty$, and for typographical convenience we replace the symbol $t_n$ with $t$.
  \begin{proposition}\label{prop:conv-avg}
    Let $(\widetilde y_t)_{t>0}\subset\widehat{\mathcal A}(\Omega_t)$ be a sequence of minimizers of the energies $\mathcal E_t$ defined in \eqref{eq:12}. Then there is a (non-relabeled) sequence such that the averages defined in \eqref{eq:33} and \eqref{eq:33b} satisfy
  \begin{equation}\label{eq:34}
    \begin{aligned}
      &z_t\to z\quad\text{weakly in}\quad W^{2,p}((0,\ell);\mathbb R^2),\qquad \text{and}\\
      &b_t\to b\quad\text{weakly in}\quad W^{1,p}((0,\ell);\mathbb R^2),
     \end{aligned}
   \end{equation}
Moreover, the limit $(z,b)$ is a minimizer of the  functional $\mathcal F: W^{2,p}((0,\ell);\mathbb R^2)\times W^{1,p}((0,\ell);\mathbb R^2)\to\mathbb R\cup\{+\infty\}$  defined by 
  \begin{align}\label{eq:32}
    \mathcal{F}(z,b)=
      \displaystyle \int_0^{\ell} W(x_1,(z'\vert b))+\mathfrak M(x_1,z,(z'|b))  +\mu \ell^p(| z^{\prime\prime}|^2 +2|b'|^2)^{p/2}{\rm d}x_1
  \end{align}
   on the set 
  \begin{equation}\label{eq:21}
    \begin{aligned}
      \hypertarget{defB}{\hyperlink{defB}{\mathcal B}=\{&(z,b)\in W^{2,p}((0,\ell);\mathbb R^2)\times W^{1,p}((0,\ell);\mathbb R^2):\\
      &z(0)=0, b(0)=e_2, \det(z'|b)\ge\varepsilon, \text{for some }\varepsilon>0\}.}
\end{aligned}
  \end{equation}
\end{proposition}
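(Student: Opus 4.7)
The plan is to combine a dimension-reduction rescaling with the direct method of the calculus of variations, exploiting the $W^{2,p}$ regularization to obtain strong compactness of the deformation gradient across the rescaled cross section.

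\textbf{Compactness.} The starting point is the \emph{a priori} bound $\mathcal E_t(\widetilde y_t)\le \mathcal E_t(\mathrm{id})\le Ct$, which combined with the coercivity \eqref{eq:13} yields $\int_{\Omega_t}|\nabla \widetilde y_t|^p\,dx\le Ct$, $\int_{\Omega_t}|\nabla^2 \widetilde y_t|^p\,dx\le Ct$, and $\int_{\Omega_t}|\det\nabla \widetilde y_t|^{-q}\,dx\le Ct$. Passing to fiber-wise averages via Jensen's inequality converts these into the uniform bounds $\|z_t\|_{W^{2,p}(0,\ell)}+\|b_t\|_{W^{1,p}(0,\ell)}\le C$. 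By Banach--Alaoglu and Morrey's embedding I would extract subsequences with $z_t\rightharpoonup z$ in $W^{2,p}$ (strongly in $C^{1,\alpha}$) and $b_t\rightharpoonup b$ in $W^{1,p}$ (strongly in $C^{0,\alpha}$), $\alpha=1-1/p$; the boundary conditions $z(0)=0$ and $b(0)=e_2$ descend immediately from $\widetilde y_t|_{\Gamma_t}=\mathrm{id}$. A Poincar\'e--Wirtinger estimate on each vertical fiber $\{x_1\}\times(-t/2,t/2)$ controls the $L^p$-distance between $\partial_j\widetilde y_t$ and its thickness average by $t\cdot\|\partial^2_{2j}\widetilde y_t\|_{L^p}$, which is of order $t^{1+1/p}$; combined with the uniform convergence of the averages, this shows that the rescaled gradient $G_t(x_1,\xi_2):=\nabla\widetilde y_t(x_1,t\xi_2)$ converges strongly in $L^p(\widetilde\Omega;\mathbb R^{2\times 2})$ to the constant-in-$\xi_2$ matrix $(z'\vert b)$.

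\textbf{Lower bound.} For the elastic term, continuity of $W_e$, the strong $L^p$ convergence of $G_t$, and a $t$-uniform lower bound $\det G_t\ge\varepsilon>0$ give, through the dominated-convergence argument allowed by the growth condition \eqref{eq:54},
\[
\int_0^\ell W(x_1,(z'\vert b))\,dx_1\le\liminf_{t\to 0}\frac{1}{t}\int_{\Omega_t}W(x_1,\nabla\widetilde y_t)\,dx.
\]
The magnetic term is handled identically: the uniform lower bound on $|Fa|$ (Remark \ref{rem:2}) removes the singularity at $F=0$, leaving $\mathfrak M$ as a continuous integrand to which the same strong-convergence argument applies. For the higher-gradient contribution, weak lower semicontinuity of the $L^p$-norm in $W^{2,p}$ together with the vanishing of $\partial^2_{22}\widetilde y_t$ in the natural rescaling (so that only $\partial^2_{11}$ and the mixed second derivatives contribute, whose fiber averages equal $z_t''$ and $b_t'$) deliver the factor $(|z''|^2+2|b'|^2)^{p/2}$; the $2$ records the pair of off-diagonal mixed entries.

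\textbf{Recovery and conclusion.} For any competitor $(\hat z,\hat b)\in\mathcal B$, the rod ansatz $\hat y_t(x_1,x_2):=\hat z(x_1)+x_2\hat b(x_1)$ lies in $\widehat{\mathcal A}(\Omega_t)$ for $t$ sufficiently small (using $\det(\hat z'\vert\hat b)\ge\varepsilon$ and smoothness), and $\mathcal E_t(\hat y_t)/t\to\mathcal F(\hat z,\hat b)$ by direct computation. Combining this with the lower bound and the minimality of $\widetilde y_t$,
\[
\mathcal F(z,b)\le\liminf_{t\to 0}\frac{\mathcal E_t(\widetilde y_t)}{t}\le\limsup_{t\to 0}\frac{\mathcal E_t(\hat y_t)}{t}=\mathcal F(\hat z,\hat b),
\]
so $(z,b)$ minimizes $\mathcal F$ on $\mathcal B$. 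The main obstacle is securing the $t$-uniform lower bound $\det G_t\ge\varepsilon>0$: the constant in Lemma \ref{prop:1} depends on the Lipschitz character of the underlying domain, which degenerates as $t\to 0$. I would address this by transferring the problem to the fixed domain $\widetilde\Omega$ through a determinant-preserving intrinsic rescaling and checking that the $W^{2,p}$ norm and the $L^q(1/\det)$ integral remain controlled by $t$-independent constants before Lemma \ref{prop:1} is invoked; a secondary technicality is passing to the limit in the Ciarlet--Ne\v cas injectivity condition under strong $C^{1,\alpha}$ convergence of $\widetilde y_t$, which proceeds as in Step 2 of the proof of Proposition \ref{prop:existence}.
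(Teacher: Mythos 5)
Your overall strategy -- rescale, extract weak $W^{2,p}$/$W^{1,p}$ limits via Jensen, prove a liminf inequality and a recovery-sequence upper bound, conclude by sandwiching -- coincides with the paper's, but there are two genuine gaps and one spurious issue in the details.

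\textbf{The uniform determinant bound.} You correctly flag that the lower bound $\det\geq\varepsilon$ cannot come from invoking Proposition~\ref{thr:1} on $\Omega_t$, because the Healey--Kr\"omer constant degenerates with the domain. Your proposed remedy, however, is not the paper's and as stated risks the very pitfall the authors warn against in Section~\ref{sec:concluding}: the rescaling $\widetilde y_t(x_1,\xi_2)=y_t(x_1,t\xi_2)$ is \emph{not} determinant-preserving for the ordinary gradient ($\det\nabla\widetilde y_t=t\det\nabla y_t$), and Proposition~\ref{thr:1} applied on $\widetilde\Omega$ controls only $\det\nabla\widetilde y_t$, which vanishes as $t\to 0$. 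The determinant one actually needs is that of the \emph{scaled} gradient $\nabla_t\widetilde y_t$, and the inequality $|\det\nabla_t\widetilde y_t|\leq|\det\nabla\widetilde y_t|$ is false. The paper sidesteps this entirely: it first passes to the limit, shows via Fatou and the coercivity \eqref{eq:13} that $\int_0^\ell(\det(z'|b))^{-q}\,{\rm d}x_1<\infty$, applies Lemma~\ref{prop:1} with $n=1$ on the fixed interval $(0,\ell)$ (where no $t$-dependent Lipschitz constant enters) to get $\det(z'|b)\geq\varepsilon$, and only then transfers the bound back to $\det\nabla_t\widetilde y_t\geq\varepsilon/2$ for small $t$ using the \emph{uniform} $C^{0,1-2/p}$ convergence of $\nabla_t\widetilde y_t$ to $(z'|b)$ -- which your strong-$L^p$ compactness statement is too weak to deliver. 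You would have to upgrade to uniform convergence (as the paper does via Morrey on $W^{1,p}(\widetilde\Omega;\mathbb R^{2\times 2})$) for this transfer step to work.

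\textbf{The recovery sequence.} Your ansatz $\hat y_t(x_1,x_2)=\hat z(x_1)+x_2\hat b(x_1)$ fails to lie in $W^{2,p}(\Omega_t;\mathbb R^2)$: it has $\partial^2_{11}\hat y_t=\hat z''+x_2\hat b''$, and $\hat b$ is only $W^{1,p}$, so $\hat b''$ does not exist as an $L^p$ function. The paper first approximates $\hat b$ by a smooth sequence $\hat b_t\to\hat b$ strongly in $W^{1,p}(0,\ell)$ satisfying $\hat b_t(0)=e_2$, then sets $\hat y_t(x_1,x_2)=\hat z(x_1)+tx_2\hat b_t(x_1)$ on $\widetilde\Omega$; without this smoothing the recovery sequence is not admissible.

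Finally, you mention passing to the limit in the Ciarlet--Ne\v cas condition. This is a non-issue: the reduced admissible set $\widehat{\mathcal A}_c(\Omega_t)$ in \eqref{eq:6b} deliberately omits both the positivity of $\det\nabla y$ and the Ciarlet--Ne\v cas inequality, and the limit set $\mathcal B$ in \eqref{eq:21} likewise imposes neither. Nothing of this sort needs to be tracked here.
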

Before proving the above result, some remarks are in order.
\begin{remark}
  The model emerging from dimension reduction features a non-simple one-dimensional continuum with an extra \emph{director field} $b$, which represents, physically, \emph{the rotation and stretch of the cross sections}.
\end{remark}
\begin{remark}
  Since the energy is non-convex, we cannot expect uniqueness of minimizers neither for the original 2-D energy nor for the derived 1-D energy. In particular, different choices of minimizers of the 2-D energy may be possible in principle, and these minimizers may possibly lead to different limits.
  \end{remark}
  \subsection{Proof of the convergence result.}
Proposition \ref{prop:conv-avg} is a consequence of a slightly stronger statement, which is proved in Theorem \ref{main-thm} below. The theorem in question is the main result of the present paper, and its statement requires the introduction of some further machinery from dimension reduction. 
    
As a start, we introduce the scaled deformation $\widetilde y_t=y_t\circ \pi_t$, that is,
\begin{equation}\label{eq:35}
\widetilde y_{t}(x_1,x_2)= y_t(x_1,tx_2)
\end{equation}
for all $x=(x_1,x_2)\in\widetilde\Omega$. Then
\begin{equation}\label{eq:50b}
  \mathcal E_t(y_t)=t\widetilde{\mathcal E}_t(\widetilde y_t),
\end{equation}
where $\widetilde{\mathcal E}_t:W^{2,p}(\widetilde\Omega;\mathbb R^2)\to\mathbb R$ is the functional defined by
\begin{equation}\label{eq:19}
  \widetilde{\mathcal E}_t(\widetilde y_t)=\int_0^\ell\int_{-1/2}^{+1/2}W(x_1,\nabla_t \widetilde y)+\mathfrak M(x_1,\widetilde y_t,\nabla_t\widetilde y_t)+\mu\ell^p|\nabla_t\nabla_t \widetilde y|^p {\rm d}x_1{\rm d}x_2.
  \end{equation}
where we have set
\begin{equation}\label{eq:20}
\nabla_{t} \widetilde y_t=\partial_1 {\widetilde y_t}\otimes e_1+\frac 1 t\partial_2 {\widetilde y_t}\otimes e_2=\left(\partial_{1} {\widetilde y_t} | t^{-1}  \partial_{2} {\widetilde y_t}\right),
\end{equation}
and
\begin{equation}\label{eq:22}
  \nabla_t\nabla_t {\widetilde y_t}=\partial_1 \nabla_t {\widetilde y_t}\otimes e_1+\frac 1 t\partial_2 \nabla_t {\widetilde y_t}\otimes e_2=\Bigg(\begin{array}{c|c} \partial^2_{11}{\widetilde y_t}&  t^{-1}\partial^2_{12}{\widetilde y_t}\\
                              \hline                                                                                      t^{-1}\partial^2_{21}{\widetilde y_t}& t^{-2}\partial^2_{22}{\widetilde y_t}\end{array}\Bigg).
\end{equation}

We can now state the main result. We recall that $\widehat{\mathcal A}(\Omega_t)=\bigcup_{c>0}\widehat{\mathcal A}_c(\Omega_t)$.
 \begin{theorem}\label{main-thm}
  Let $\{\widetilde{y}_{t}\}_{t>0}\subset \widehat{\mathcal{A}}(\widetilde\Omega)$ be a sequence of minimizers of $\widetilde{\mathcal E}_{t}$. Then there is a (non-relabeled)  subsequence such that 
  \begin{subequations}\label{eq:52}
  \begin{equation}\label{eq:52a}
     {\widetilde y_t} \rightharpoonup \widetilde y \quad \text { weakly in } W^{2, p}({\widetilde\Omega} ; \mathbb{R}^{2}), 
  \end{equation}
and
  \begin{equation}\label{eq:52b}
      \frac{\partial_2{\widetilde y_{t}}}{t}\rightharpoonup \widetilde  b\quad \text{ weakly in }W^{1,p}({\widetilde\Omega};\mathbb R^{2}).
  \end{equation}
  \end{subequations}
The functions $(\widetilde y,\widetilde b)$ do not depend $x_2$ and hence they can be identified with a pair $(z,b)$ of functions having  the interval $(0,\ell)$ as their common domain. This pair belongs to the set $\mathcal B$ defined in \eqref{eq:21} and minimizes, over the same set, the functional $\mathcal F$ defined in \eqref{eq:32}.
\end{theorem}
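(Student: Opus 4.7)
The plan is to follow the $\Gamma$-convergence roadmap for minimizers of the rescaled functionals $\widetilde{\mathcal E}_t$, via four stages: (i) compactness, (ii) identification of the limit structure, (iii) a liminf inequality, and (iv) an explicit recovery sequence for each $(z,b)\in\mathcal B$. Stages (iii) and (iv), combined with the minimality of $\widetilde y_t$, will force the limit to be a minimizer of $\mathcal F$.

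For \emph{compactness}, I test minimality of $\widetilde y_t$ against the rescaled identity $x\mapsto(x_1,tx_2)$, which has uniformly bounded rescaled energy; together with the boundedness \eqref{eq:57} of $\mathfrak M$ and the coercivity \eqref{eq:13}, unpacked through \eqref{eq:20}--\eqref{eq:22}, this yields uniform bounds on $\widetilde y_t$ in $W^{2,p}(\widetilde\Omega;\mathbb R^2)$, on $\partial_2\widetilde y_t/t$ in $W^{1,p}(\widetilde\Omega;\mathbb R^2)$, and on $\int_{\widetilde\Omega}|\det\nabla_t\widetilde y_t|^{-q}$. Morrey's embedding then controls $\nabla_t\widetilde y_t$ uniformly in $C^{0,1-2/p}(\overline{\widetilde\Omega};\mathbb R^{2\times 2})$, and Lemma \ref{prop:1} applied to $\det\nabla_t\widetilde y_t$ delivers $\det\nabla_t\widetilde y_t\ge\varepsilon>0$ with $\varepsilon$ independent of $t$. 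Extracting subsequences produces \eqref{eq:52a}--\eqref{eq:52b}. Because $\partial_2\widetilde y_t=t(\partial_2\widetilde y_t/t)\to 0$ in $W^{1,p}$ and $\partial_{22}\widetilde y_t/t=t(\partial_{22}\widetilde y_t/t^2)\to 0$ in $L^p$, the limits $\widetilde y$ and $\widetilde b$ are $x_2$-independent and identify with functions $z,b$ on $(0,\ell)$; the clamping, the $t$-uniform determinant bound, and uniform Morrey convergence of traces of $\widetilde y_t$ and of $\partial_2\widetilde y_t/t$ ensure $z(0)=0$, $b(0)=e_2$, $\det(z'|b)\ge\varepsilon$, so that $(z,b)\in\mathcal B$.

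For the \emph{liminf inequality}, the uniform $C^{0,1-2/p}$-convergence $\nabla_t\widetilde y_t\to(z'|b)$, the lower bound on $\det\nabla_t\widetilde y_t$, and continuity of $W_e$ on the open set $\det F>0$ give $\int W(x_1,\nabla_t\widetilde y_t)\to\int W(x_1,(z'|b))$ by bounded convergence; the argument of Remark \ref{rem:2} propagates the determinant bound to $|\nabla_t\widetilde y_t\,a|\ge c>0$, delivering the analogous convergence for $\mathfrak M$. For the regularization, the weak $L^p$ limits $\partial_{11}\widetilde y_t\rightharpoonup z''$, $\partial_{12}\widetilde y_t/t\rightharpoonup b'$, and $\partial_{22}\widetilde y_t/t^2\rightharpoonup\xi$ (for some $\xi\in L^p$), combined with the weak lower semicontinuity of $|\cdot|^p$, give
\begin{equation*}
\liminf_{t\to 0}\int_{\widetilde\Omega}\mu\ell^p|\nabla_t\nabla_t\widetilde y_t|^p \ge \int_0^{\ell}\mu\ell^p(|z''|^2+2|b'|^2+|\xi|^2)^{p/2} \ge \int_0^{\ell}\mu\ell^p(|z''|^2+2|b'|^2)^{p/2},
\end{equation*}
and adding up yields $\mathcal F(z,b)\le\liminf_t\widetilde{\mathcal E}_t(\widetilde y_t)$. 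For the \emph{recovery sequence} at an arbitrary $(z^*,b^*)\in\mathcal B$, the affine extrapolation $\widehat y_t(x_1,x_2)=z^*(x_1)+tx_2\,b^*(x_1)$ satisfies the clamping, has $\nabla_t\widehat y_t=(z^{*\prime}+tx_2 b^{*\prime}|b^*)$ with $\det\nabla_t\widehat y_t\ge\varepsilon/2$ for small $t$, and a direct computation of $\nabla_t\nabla_t\widehat y_t$ (for which the natural $\xi$ is zero) gives $\widetilde{\mathcal E}_t(\widehat y_t)\to\mathcal F(z^*,b^*)$. Since $\widehat y_t\in W^{2,p}(\widetilde\Omega;\mathbb R^2)$ strictly requires $b^*\in W^{2,p}$ -- one derivative more than $\mathcal B$ guarantees -- a density argument (smooth approximation of $b^*$ in $W^{1,p}$ followed by a diagonal extraction) reduces to the smooth case. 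Combining minimality $\widetilde{\mathcal E}_t(\widetilde y_t)\le\widetilde{\mathcal E}_t(\widehat y_t)$ with the two inequalities gives $\mathcal F(z,b)\le\mathcal F(z^*,b^*)$ for every $(z^*,b^*)\in\mathcal B$, so $(z,b)$ is a minimizer.

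The principal technical hurdle I anticipate is ensuring that the constants entering Lemma \ref{prop:1} -- namely the Hölder seminorm of $\det\nabla_t\widetilde y_t$ on $\widetilde\Omega$ and the integral of its reciprocal $q$-th power -- remain uniform as $t\to 0$, so that the lower bound $\varepsilon$ on $\det\nabla_t\widetilde y_t$ is $t$-independent. Without this uniformity neither the convergence of $W$ (continuous only on $\det F>0$) nor that of the magnetic integrand (through the denominator $|Fa|$, \emph{cf.}\ Remark \ref{rem:2}) can be carried out. A secondary subtlety is the regularity mismatch of the recovery sequence, standard but requiring a careful diagonalization in the density argument.
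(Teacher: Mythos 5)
Your proposal is correct and runs along the same $\Gamma$-convergence roadmap as the paper (compactness $\to$ liminf $\to$ recovery sequence). There is, however, one noteworthy variation worth flagging: you apply Lemma~\ref{prop:1} \emph{directly} to $\det\nabla_t\widetilde y_t$ on the 2D domain $\widetilde\Omega$ (with $\alpha = 1-2/p$, which requires exactly $q\ge 2p/(p-2)$ as in \eqref{eq:15}), obtaining the $t$-uniform lower bound $\det\nabla_t\widetilde y_t\ge\varepsilon$ already in the compactness step. The paper instead routes this through the 1D limit: it first establishes $\det(z'|b)\in C^{0,\alpha}$ and $\int_0^\ell|\det(z'|b)|^{-q}<\infty$ from the Fatou bound \eqref{eq:37}, then applies Lemma~\ref{prop:1} with $n=1$ to get \eqref{eq:62}, and finally propagates the bound back to $\det\nabla_t\widetilde y_t$ via the uniform $C^0$-convergence \eqref{eq:27}, yielding \eqref{eq:61}. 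Your direct route is legitimate: the uniform $W^{1,p}$ bound on $\nabla_t\widetilde y_t$ (obtained from $\|\nabla\nabla_t\widetilde y_t\|_{L^p}\le\|\nabla_t\nabla_t\widetilde y_t\|_{L^p}\le C$) gives, through Morrey and the local Lipschitz estimate for $\det$, a $t$-uniform $C^{0,1-2/p}$ bound on $\det\nabla_t\widetilde y_t$, while coercivity gives the $t$-uniform integral bound on its reciprocal $q$-th power; Lemma~\ref{prop:1} then applies with $t$-independent constants. This sidesteps the caveat in Section~\ref{sec:concluding} — which is only an obstruction to applying \emph{Proposition}~\ref{thr:1} (tied to the un-scaled gradient), not Lemma~\ref{prop:1} — and arguably streamlines the liminf step, where you can then use dominated convergence in place of Fatou for the $W$-integral. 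Your explicit check that $(z,b)\in\mathcal B$ via trace convergence of $\widetilde y_t$ and $\partial_2\widetilde y_t/t$ is a useful detail the paper leaves implicit, and your identification of the need to approximate $b^*$ smoothly in the recovery ansatz (with a diagonal extraction to kill the $tx_2\widehat b_t''$ term) matches the paper's construction around \eqref{eq:28}--\eqref{eq:46}.
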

With Theorem \ref{main-thm} at hand, Proposition \ref{prop:conv-avg} follows by observing that the functions $(z,b)$ are indeed the over-the thickness averages of $(\widetilde y,\widetilde b)$ and that the convergence statements in \eqref{eq:34} are an immediate consequences of \eqref{eq:52a} and  \eqref{eq:52b}.

\subsection{Proof of the convergence result.}\label{sec:proof-conv-result-1}
We split the proof of Theorem \ref{main-thm} in three  steps.

\paragraph{Step 1. Compactness} Let ${\rm id}_t:\Omega_t\to\Omega_t$ be the identity map over $\Omega_t$. The trivial deformation ${\rm id}_t$ is admissible (i.e. it belongs to $\widehat{\mathcal A}_c(\Omega_t)$) for some $c=c(t)$ for every $t$. The corresponding rescaled deformation is $\widetilde{\rm id}_t={\rm id}\circ \pi_t$. We have $\widetilde{\mathcal E}_t(\widetilde{\rm id}_t)\stackrel{\eqref{eq:50b}}=t\mathcal E({\rm id}_t)=C_1\,t\,{\rm meas}(\Omega_t)\stackrel{\eqref{eq:63}}=C_2$, for some constants $C_1$ and $C_2$. Note that \eqref{eq:54} guarantees that $x_1\mapsto W(x_1,I)$ is bounded in $(-1/2,1/2)$. Therefore, the comparison between the energy of the rescaled minimizer $\widetilde y_t$ with that of the competitor $\widetilde{\rm id}_t$ yields
\begin{equation}\label{eq:26b}
  \widetilde{\mathcal E}_t(\widetilde y_t)\le C
\end{equation}
Since the Zeeman energy $\mathfrak M$ is bounded (recall definition \eqref{eq:53} and assumptions \eqref{eq:7} and \eqref{eq:18}), it follows from \eqref{eq:12} and \eqref{eq:26b} that
\begin{equation}\label{eq:26}
  \int_0^\ell\int_{-1/2}^{+1/2} W(x_1,\nabla_t\widetilde y_t)+\mu\ell^p|\nabla_t\nabla_t\widetilde y_t|^p{\rm d}x_1{\rm d}x_2\le C.
\end{equation}
      It follows from \eqref{eq:21}, from the definition \eqref{eq:9} of $W$, and from the coercivity assumption $\eqref{eq:13}_1$ on $W_e$ that
\begin{align}
\left\|\nabla_{t} {\widetilde y}_{t}\right\|_{L^{p}\left({\widetilde\Omega} ; \mathbb{R}^{2\times 2}\right)} \leq C,\label{eq:4.3}
\end{align}
for every $t>0$; it also follows that
\begin{equation}\label{eq:23}
  \|\nabla_t\nabla_t {\widetilde y}_t\|_{L^p({\widetilde\Omega};\mathbb R^{2\times 2\times 2})}\le C.
\end{equation}
On the other hand, by \eqref{eq:20},
\begin{equation}\label{eq:36}
\left\|\nabla {\widetilde y}_{t}\right\|_{W^{1,p}\left({\widetilde\Omega} ; \mathbb{R}^{2 \times 2}\right)} \leq\left\|\nabla_{t} {\widetilde y}_{t}\right\|_{W^{1,p}\left({\widetilde\Omega} ; \mathbb{R}^{2 \times 2}\right)};
\end{equation}
moreover, by \eqref{eq:22},
\begin{equation}\label{eq:24}
\left\|\nabla\nabla {\widetilde y}_{t}\right\|_{L^{p}\left({\widetilde\Omega} ; \mathbb{R}^{2 \times 2\times 2}\right)} \leq\left\|\nabla_t\nabla_{t} {\widetilde y}_{t}\right\|_{L^{p}\left({\widetilde\Omega} ; \mathbb{R}^{2 \times 2\times 2}\right)}.
\end{equation}
Since $\widetilde y_t$ is fixed on a part of the boundary having positive length (see \eqref{eq:6b}), the inequalities \eqref{eq:4.3}--\eqref{eq:24} imply that the sequence $(\widetilde y_t)_t$ is uniformly bounded in $W^{2,p}({\widetilde\Omega};\mathbb R^2)$. Thus, there exists $\widetilde y\in W^{2,p}({\widetilde\Omega};\mathbb R^2)$ and a subsequence of $(\widetilde y_t)_t$, which we do not relabel, such that
\begin{equation}\label{eq:pippo}
\begin{aligned}
  &{\widetilde y_t} \rightharpoonup \widetilde y \quad \text { weakly in } W^{2, p}({\widetilde\Omega} ; \mathbb{R}^{2}).
  \end{aligned}
\end{equation}
As a further consequence of \eqref{eq:4.3} and \eqref{eq:23}, we have the bounds
\begin{equation}
  \left\|\frac{\partial_2\widetilde y_t}{t}\right\|_{L^p({\widetilde\Omega};\mathbb R^2)}\le C,\qquad \left\|\frac{\partial_{22}\widetilde y_t}{t^2}\right\|_{L^p({\widetilde\Omega};\mathbb R^2)}\le C.\label{bds}
\end{equation}
The first of these bounds along with \eqref{eq:pippo} and connectedness of the vertical cross-sections of  $\widetilde\Omega$    imply that
\begin{equation}
  \begin{aligned}
    &\partial_2 \widetilde y=0 \quad\text{ a.e. in }{\widetilde\Omega},
    \end{aligned}
  \end{equation}
  and that there exists $\widetilde b\in W^{1, p}({\widetilde\Omega} ; \mathbb{R}^{2})$ such that
  \begin{equation}\label{eq:5}
      \frac{\partial_2{\widetilde y_t}}t\rightharpoonup \widetilde  b\quad \text{ weakly in }W^{1,p}({\widetilde\Omega};\mathbb R^{2});
  \end{equation}
the second bound  in \eqref{bds} implies that the weak limit $\widetilde b$ is independent on $x_2$:
\begin{equation}\label{eq:pluto2}
  \begin{aligned}
  &\partial_2 \widetilde  b=0\quad\textrm{ a.e. in }{\widetilde\Omega},
  \end{aligned}
\end{equation}
and that there exists $\widetilde  d\in W^{1,p}({\widetilde\Omega};\mathbb R^2)$ such that
\begin{equation}\label{eq:pluto}
  \begin{aligned}
  &\frac{\partial_{22}^{2} {\widetilde y_t}}{t^{2}} \rightharpoonup  d \quad \text { weakly in } L^{p}({\widetilde\Omega} ; \mathbb{R}^{2}). 
  \end{aligned}
\end{equation}
Summing up,
\begin{equation}\label{eq:10}
  \begin{aligned}
     \nabla_t \widetilde y_t&=\partial_1\widetilde y_t\otimes e_1+\frac{\partial_2\widetilde y_t}t\otimes e_2=\left({\partial_1\widetilde y_t}\mid\frac{\partial_2\widetilde y_t}t\right)\rightharpoonup (\partial_1 \widetilde y|\widetilde b)\quad\text{ weakly in }W^{1,p}({\widetilde\Omega};\mathbb R^{2\times 2}),\qquad\text{and}\\
    \nabla_t^2 \widetilde y_t&=\partial_{11}\widetilde y_t\otimes e_1\otimes e_1+\frac{\partial_{12} \widetilde y_t}t\otimes (e_1\otimes e_2+e_2\otimes e_1)+ \displaystyle {\frac{\partial_{22}^{2} \widetilde y_{t}}{t^{2}}}\otimes e_2\otimes e_2\\
    &=\left(   \begin{array}{c|c}
               \partial_{11}\widetilde y_t & \displaystyle \frac{\partial_{12}\widetilde y_t}t\\
                             \\
               \hline
                             \\
         \displaystyle\frac{\partial_{21}\widetilde y_t}t &  \displaystyle {\frac{\partial_{22}^{2} \widetilde y_{t}}{t^{2}}}
             \end{array}\right)
           \rightharpoonup
           \left(   \begin{array}{c|c}
                      \partial_{11}\widetilde y & \displaystyle {\partial_{1} \widetilde b}\\
                             \\
\hline
                             \\
          \displaystyle {\partial_{1} \widetilde b} &  \displaystyle  d
    \end{array}\right)
  \quad\text{ weakly in } L^p({\widetilde\Omega};\mathbb R^{2\times 2\times 2}).
\end{aligned}
\end{equation}
It remains for us to show that $(\widetilde y,,\widetilde b)$ minimizes $\mathcal F$. This is done in the next two steps.

\paragraph{Step 2: Liminf inequality} Thanks to the first of \eqref{eq:10}, since $p>2$, Morrey's embedding theorem yields
\begin{equation}\label{eq:27}
  \nabla_t\widetilde y_t\to (\partial_1 \widetilde y|\widetilde b)\quad \text{ strongly  in } C^{0,1-2/p}(\overline{\widetilde\Omega};\mathbb R^{2\times 2}).
\end{equation}
By the same token, since $h$ is a continuous function (\emph{cf.} \eqref{eq:18}), we have
\begin{equation}
    h\circ \widetilde y_t\to  h\circ \widetilde y \quad\text{ pointwise  a.e. in } \widetilde\Omega.
  \end{equation}
   Note that $W$ is bounded from below  and   $W(\cdot,\nabla_t\widetilde y_t)$ converges pointwise to $ W(\cdot, \partial_1\widetilde y|b)$ \BBB in $\widetilde\Omega$. Hence, by the Fatou lemma and by \eqref{eq:26},
  \begin{equation}\label{eq:37}
    \int_{\widetilde\Omega} W\big(x_1,(\partial_1\widetilde y|\widetilde b)\big){\rm d}x_1{\rm d}x_2\le \liminf_{t\to 0}  \int_{\widetilde\Omega} W(x_1,\nabla_t\widetilde y_t){\rm d}x_1{\rm d}x_2\le C.
  \end{equation}
  Likewise, by the weak lower semicontinuity of the $L^p$ norm, if follows from \eqref{eq:26} that
  \begin{equation}\label{eq:11}
  \begin{aligned}
\int_{\widetilde\Omega} \Big(|\partial_{11}y|^2+2|\partial_1 \widetilde b|^2\Big)^{\frac p 2}{\rm d}x_1{\rm d}x_2\le \int_{\widetilde\Omega} \left|  \left(   \begin{array}{c|c}
\partial_{11}\widetilde y & \displaystyle \partial_{1} \widetilde b\\
\\
\hline
\\
{\rm sym} &  \displaystyle d
\end{array}\right)\right|^p{\rm d}x_1{\rm d}x_2\stackrel{\eqref{eq:10}}\le \liminf_{t\to 0}\int_{\widetilde\Omega} |\nabla_t^2 \widetilde y_t|^p{{\rm d}x_1{\rm d}x_2}\ .
\end{aligned}
\end{equation}
As $\widetilde y$ and $\widetilde b$ are both independent of $x_2$, we identify them with $z\in W^{2,p}(0,\ell;\R^2)$ and $b\in W^{1,p}(0,\ell;\R^2)$, respectively. This yields 
\begin{equation}\label{eq:29}
  \int_{{\widetilde\Omega}}W\big(x_1,(\partial_{1} \widetilde y \mid \widetilde b)\big) {\rm d} x_{1} {\rm d} x_{2}=\int_0^\ell W\big(x_1,(z'|b)\big){\rm d}x_1,
\end{equation}
and
\begin{equation}\label{eq:30}
  \begin{aligned}
  \int_{{\widetilde\Omega}}\big(|\partial_{11} \widetilde y|^{2}+2|\partial_{1} \widetilde{b}|^{2}\big)^{\frac p 2} {\rm d} x_{1} {\rm d} x_{2}=\int_{0}^{\ell}(\left|z^{\prime \prime}\right|^{2}+2\left|b^{\prime}\right|^{2})^{\frac p 2} {\rm d} x_{1}.
  \end{aligned}
\end{equation}
In view of definition \eqref{eq:9} and hypothesis $\eqref{eq:13}_1$, the bound \eqref{eq:37} entails that $\det(z'|b)>0$ a.e. in $(0,\ell)$ with
\begin{equation}
  \int_0^\ell \frac{1}{(\det(z'|b))^q}{\rm d}x_1<+\infty.
\end{equation}
Since $(z'|b)\in W^{1,p}((0,\ell);\mathbb R^{2\times 2})$,  we have $(z'|b)\in  C^{0,\alpha}([0,\ell];\mathbb R^{2\times 2})$ for $\alpha=(p-1)/p$. As a consequence, $\det(z'|b)\in C^{0,\alpha}([0,\ell])$. Since by  our  hypothesis $q>2p/(p-2)$ (\emph{cf.} \eqref{eq:15}), we have also $q>2/\alpha$. Thus Lemma \ref{prop:1} can be applied with $n=1$, $U=(0,\ell)$ and $f=\det(z'|b)$ to obtain
\begin{equation}\label{eq:62}
  \det(z'|b)\ge \varepsilon
\end{equation}
for some $\varepsilon>0$; then, by \eqref{eq:27}, we have
\begin{equation}\label{eq:61}
  \det(\nabla_t\widetilde y_t)\ge\frac \varepsilon 2,
\end{equation}
provided that $t$ is sufficiently small. This implies, in turn, that $\mathfrak M(x_1,\widetilde y_t,\nabla_t\widetilde y_t)$ converges uniformly to $\mathfrak M(x_1,\widetilde y,(\partial_1\widetilde y|\widetilde b)$ in $\widetilde\Omega$ (a fact that can be  established through an argument similar to that leading to \eqref{eq:25}, see also Remark \ref{rem:2}), and hence
\begin{equation}\label{eq:40}
  \int_{\widetilde\Omega} {\mathfrak M}(x_1,\widetilde y_t,\nabla_t\widetilde y_t){\rm d}x_1{\rm d}x_2\to \int_{\widetilde\Omega} {\mathfrak M}(x_1,\widetilde y,(\partial_1\widetilde y|\widetilde b){\rm d}x_1{\rm d}x_2=\int_0^\ell \mathfrak M(x_1,z,(z'|b)){\rm d}x_1
\end{equation}
as $t\to 0$. By putting  \eqref{eq:37}--\eqref{eq:30} and \eqref{eq:40} together, we have the ``liminf inequality'':
\begin{equation}\label{eq:42}
  \mathcal F(z,b)\le\liminf\widetilde{\mathcal E}_t(\widetilde y_t).
\end{equation}
To conclude the proof that $(z,b)$ is a minimizer of $\mathcal F$, we shall show in the next step that for every competitor $(\widehat z,\widehat b)\in\hyperlink{defB}{\mathcal B}$ there exists a recovery sequence $(\widehat y_t)_t$ such that
\begin{equation}\label{eq:41}
  \limsup_{t\to 0}\widetilde{\mathcal E}_t(\widetilde y_t)\le \mathcal F(\widehat z,\widehat b).
\end{equation}
Then, the combination of \eqref{eq:42}, the fact that $\widetilde y_t$ minimizes $\widetilde{\mathcal E}_t$, and \eqref{eq:41} yields the desired result:
\begin{equation}\label{eq:50}
  \mathcal F(z,b)\stackrel{}\le\liminf_{t\to 0}\widetilde{\mathcal E}_t(\widetilde y_t)\le\limsup_{t\to 0}\widetilde{\mathcal E}_t(\widehat y_t)\le\mathcal F(\widehat z,\widehat b).
\end{equation}

\paragraph{Step 3 Recovery sequence}
Since $(\widehat z,\widehat b)\in\hyperlink{defB}{\mathcal B}$, there is a constant $\varepsilon>0$ such that
    \begin{equation}\label{eq:43}
      \det (\widehat z'|\widehat b)\ge \varepsilon.
    \end{equation}
We take a sequence $(\widehat b_t)_t\subset C^\infty((0,\ell);\mathbb R^2)$ such that
\begin{equation}\label{eq:28}
\widehat b_t(0)=e_2,\qquad \widehat b_{t} \rightarrow \widehat b\quad \text{strongly in}\quad W^{1,p}((0,\ell);\mathbb R^2).
\end{equation}
Thanks to the smoothness of $\widehat b_t$, the functions
$$
\widehat y_t\left(x_1, x_2\right)=\widehat z\left(x_1\right)+t x_{2} \widehat b_t\left(x_1\right)
$$
belong to the space $W^{2,p}({\widetilde\Omega};\mathbb R^2)$. Moreover, if we identify $\widehat z$ and $\widehat b$ with functions defined on $\widetilde\Omega$, then, thanks the the third of \eqref{eq:28},
\begin{equation}\label{eq:31}
  \nabla_t\widehat y_t=(\widehat z'+tx_2\widehat b'_t|\widehat b_t)\to (\widehat z'|\widehat b)\quad\text{strongly in }W^{1,p}({\widetilde\Omega};\mathbb R^{2\times 2}),
\end{equation}
and
\begin{equation}\label{eq:46}
  \nabla_t\nabla_t\widehat y_t= \left(   \begin{array}{c|c}
\widehat z''+tx_2 \widehat b_t'' & \displaystyle \widehat b_t'\\
\\
\hline
\\
{\rm sym} &  0
\end{array}\right)\to \left(   \begin{array}{c|c}
\widehat z'' & \displaystyle \widehat b'\\
\\
\hline
\\
{\rm sym} &  0
\end{array}\right)\quad\text{strongly in }L^p({\widetilde\Omega};\mathbb R^{2\times 2\times 2}).
\end{equation}
By \eqref{eq:31} and by Vitali's convergence theorem, the functions $|\nabla_t\widehat y_t|^p$ are uniformly equi-integrable. Thus, for every $\varepsilon>0$ there exists $\delta(\varepsilon)>0$ such that
\begin{equation}\label{eq:39}
  \operatorname{meas}N\le\delta(\varepsilon)\quad\Rightarrow\quad \int_B\left(|\nabla_t\widehat y_t|^p+|(\widehat z'|\widehat b)|^p\right){\rm d}x_1{\rm d}x_2<\varepsilon
\end{equation}
for every measurable set $N\subset\widetilde\Omega$. Moreover, there is a subsequence such that $\nabla_t\widehat y_t$ converges almost everywhere. Since $W(x_1,\cdot)$ is continuous, we have also that the functions $W(x_1,\nabla_t\widehat y_t)$ converge almost everywhere and hence in measure. As a consequence of this fact, for $\varepsilon>0$ fixed, the measure of the set
\begin{equation}\label{eq:44}
N(\varepsilon,t)=\{x\in\widetilde\Omega:|W(x_1,\nabla_t\widehat y_t)-W(x_1,(\widehat z'|\widehat b)|>\varepsilon\},
\end{equation}
satisfies
\begin{equation}\label{eq:51}
   \operatorname{meas}N(\varepsilon,t) \to 0 \text{ as } t\to 0.
\end{equation}
By \eqref{eq:39}, \eqref{eq:51}, and by the growth assumption $\eqref{eq:54}$, we have
\begin{equation}\label{eq:45}
  \limsup_{t\to 0}\int_{N(\varepsilon,t)}\left|W(x_1,\nabla_t\widehat y_t)|+|W(x_1,(\widehat z'|\widehat b))\right|{\rm d}x_1{\rm d}x_2\le C\varepsilon
\end{equation}
for a suitably large constant $C$. Accordingly, we have
\begin{align}\label{eq:77}
    \limsup_{t\to 0} &\left|\int_{\widetilde\Omega}\left(W(x_1,\nabla_t\widehat y_t)-W(x_1,(\widehat z'|\widehat b)\right){\rm d}x_1{\rm d}x_2\right|\nonumber\\
    &\le
    \limsup\int_{\widetilde\Omega}\big|W(x_1,\nabla_t\widehat y_t)-W(x_1,(\widehat z'|\widehat b)\big|\,{\rm d}x_1{\rm d}x_2 \nonumber \\
&\le\limsup\int_{\widetilde\Omega\setminus N(\varepsilon,t)}\big|W(x_1,\nabla_t\widehat y_t)-W(x_1,(\widehat z'|\widehat b))\big|\,{\rm d}x_1{\rm d}x_2\nonumber\\
&\quad +\limsup\int_{N(\varepsilon,t)}\big|W(x_1,\nabla_t\widehat y_t)\big|+\big|W(x_1,(\widehat z'|\widehat b))\big|\,{\rm d}x_1{\rm d}x_2
\nonumber\\
&\le \varepsilon\operatorname{meas}(\widetilde\Omega)+C\varepsilon.
\end{align}
Since the constant $\varepsilon$ can be taken arbitrarily small, we conclude that
\begin{equation}\label{eq:47}
  \int_{\widetilde\Omega} W(x_1,\nabla_t\widehat y_t){\rm d}x_1{\rm d}x_2\to\int_{\widetilde\Omega} W(x_1,(\widehat y'|\widehat b)){\rm d}x_1{\rm d}x_2=\int_0^\ell W(x_1,(\widehat y'|\widehat b)){\rm d}x_1.
\end{equation}
Thus, taking also in to account \eqref{eq:46}, we have:
\begin{equation}\label{eq:49}
  \int_{\widetilde\Omega} W(x_1,\nabla_t\widehat y_t)+\mu\ell^p|\nabla_t\nabla_t\widehat y_t|^p{\rm d}x_1{\rm d}x_2\to \int_{\widetilde\Omega} W(x_1,(\widehat z'|\widehat b))+\mu\ell^p(|\widehat z''|^2+2|\widehat z'|^2)^{\frac p 2}{\rm d}x_1{\rm d}x_2.
\end{equation}
A further consequence of \eqref{eq:31} is that $\nabla_t\widehat y_t$ converges to $(\widehat z'|\widehat b)$ uniformly in $\overline{\widetilde\Omega}$; therefore, by \eqref{eq:43}, $\det(\nabla_t\widehat y_t)\ge\varepsilon/2$ for $t$ sufficiently small. This means that for each $(x,y)$ fixed, the function $\mathfrak M(x,y,\cdot)$ (which is continuous on the set of matrices satisfying $\det F\ge \varepsilon/2$, see Remark \ref{rem:2}\BBB), can be replaced by any continuous extension, and the argument used to arrive at \eqref{eq:47} can be replicated to conclude that
\begin{equation}\label{eq:48}
  \int_{\widetilde\Omega} {\mathfrak M}(x_1,\widehat y_t,\nabla_t\widehat y_t){\rm d}x_1{\rm d}x_2\to\int_0^\ell \mathfrak M(x_1,\widehat z,(\widehat z'|\widehat b)){\rm d}x_2.
\end{equation}
The combination of \eqref{eq:49} and \eqref{eq:48}, in view of the definition \eqref{eq:19}, yields the desired result \eqref{eq:41}, and by \eqref{eq:50}, the proof is concluded.

\subsection{Some technical remarks about the proof.}\label{sec:concluding}

In carrying out the dimensional reduction, the limit passages \eqref{eq:1006}, \eqref{eq:40}, and \eqref{eq:48}) where we handle the magnetic energy $\mathfrak M(x,y,\nabla y)$ which appears in  definitions \eqref{eq:3}, \eqref{eq:12}, and \eqref{eq:32}), are key in our proofs. The magnetic energy, as defined in \eqref{eq:53}, is indeed discontinuous. However, as already pointed out in Remark \ref{rem:2}, such discontinuity can be removed once we know that the Jacobian (the determinant of the deformation gradient) is bounded from below by a positive constant. The coercivity  \eqref{eq:13} of the strain energy with respect to the Jacobian is key to obtain this result through Lemma \ref{prop:1} and Proposition \ref{thr:1}.

In particular, Proposition \ref{thr:1} is instrumental in the proof of the  existence of a minimizer given in Section \ref{sec:existence-minimizers}, where it is used to  obtain the bound \eqref{eq:1001}, which in turn allows us to obtain the convergence \eqref{eq:1006}. Lemma \ref{prop:1} is instead invoked  in the proof of the liminf inequality (Step 3) in Section \ref{sec:proof-conv-result-1} to establish the bound \eqref{eq:62} on the limit gradient $(z|b)$ (thanks to our choice of the admissible set $\mathcal B$ in \eqref{eq:21}), and then, by continuity, the bound  $\eqref{eq:61}$ on the scaled gradients $\nabla_t\widetilde y_t$, which eventually leads to the limit passage  \eqref{eq:40}. Note carefully that \eqref{eq:61} cannot be obtained from Proposition \ref{thr:1}, since \emph{it involves the determinant of the rescaled gradients} $\nabla_t\widetilde y_t$, whereas Proposition \ref{thr:1} is based on a bound for the standard gradient $\nabla \widetilde y_t$ (note that we do not have the inequality $|\det{\nabla_t\widetilde y_t}|\le|\det{\nabla\widetilde y_t}|$!). A similar argumentation is used to exploit \eqref{eq:50} to arrive at \eqref{eq:48}.

Finally, we observe that the proofs of existence and convergence rely only on the convexity and the $p$-growth, with $p>2$, of the regularizing term $\mu\ell^p|\nabla^2 y|^p$ in \eqref{eq:3}. In particular, this term may be replaced by $\mu\Psi(\nabla^2 y)$ with a convex function $\Psi$ with $p$-th growth.

\section{Equilibrium equations}
In this section we derive the strong form of the equations governing the mechanical equilibrium of the rod. 
With reference to the terms under integral sign in the definition \eqref{eq:32} of the functional $\mathcal F$, we observe that by frame indifference there exists an isotropic function $\widehat W_{\rm e}(G;a\otimes a)$ such that
\begin{equation}\label{eq:69}
    W_e(F;a\otimes a)=\widehat W_{\rm e}(G;a\otimes a),\qquad G=\frac 12 (F^TF-I).
\end{equation}
In particular, for $F=z'\otimes e_1+b\otimes e_2$, the Green-Lagrange strain tensor is $G=\frac 12 (|z'|^2-1) e_1\otimes e_1+\frac 12 z'\cdot b(e_1\otimes e_2+e_2\otimes e_1)+\frac 1 2 (|b|^2-1)e_2\otimes e_2$. Therefore, on letting
\begin{equation}
    \widehat W(x_1,G)=\widehat W_{\rm e}(G;a\otimes a),
\end{equation}
and
\begin{equation}\label{eq:97b}
w(x_1,\eta,\zeta,\theta\color{black})=\widehat W\Big(x_1,\frac 1 2(\eta-1) e_1\otimes e_1+\frac 12 \zeta(e_1\otimes e_2+e_2\otimes e_1)+\frac 1 2 (\theta-1)e_2\otimes e_2\color{black}\Big),
\end{equation}
we have
\begin{equation}
    W(x_1,(z'|b))=w(x_1,|z'|^2,b\cdot z',|b|^2).
\end{equation}
Thus, upon setting
\begin{equation}\label{eq:99c}
    \mathfrak m(x_1,h,z',b)=Mh\cdot\frac{(z'|b)a}{|(z'|b)a|}\quad (=\mathfrak M(x_1,h,(z'|b))),
\end{equation}
we can write,
\begin{equation}\label{eq:102f}
    \mathcal F(z,b)=\int_0^L w(x_1,|z'|^2,b\cdot z',|b|^2\color{black})-\mathfrak m(x_1,h\circ z,z',b)+\mu\ell^p(|z''|^2+2 |b'|^2)^{\frac p 2}{\rm d}x.
\end{equation}
The requirement that $\mathcal F$ be stationary for a virtual variation $(\widetilde z,\widetilde b)$ of its arguments can be given the form of a \emph{virtual-work equation}:
\begin{equation}\label{eq:99b}
    \int_0^L n\cdot\widetilde z'+r\cdot\widetilde z''+q\cdot\widetilde b +m\cdot\widetilde b'\,{\rm d}x_1=\int_0^L f_{\rm m}\cdot\widetilde z+n_{\rm m}\cdot\widetilde z'+q_{\rm m}\cdot\widetilde b\,{\rm d}x_1.
\end{equation}
The left-hand side of \eqref{eq:99b} is the virtual work performed by the internal forces over the variation $(\tilde z,\tilde b)$. In particular, the force-like quantities $n,r,q,m$ are given by
\begin{equation}\label{eq:98a}
\begin{aligned}
    & n=2 w_{,2}(x_1,|z'|^2,b'\cdot z',|b|^2)z'+w_{,3}(x_1,|z'|^2,b'\cdot z',|b|^2)b,\\
    &q=w_{,3}(x_1,|z'|^2,b'\cdot z',|b|^2)z'+2w_{,4}(x_1,|z'|^2,b'\cdot z',|b|^2)b\\
            & r=p\mu\ell^p(|z''|^2+2|b'|^2)^{\frac p2-1}z'',\\
    & m=2p\mu\ell^p(|z''|^2+2|b'|^2)^{\frac p 2-1}b',
    \end{aligned}
\end{equation}
with $w_{,i}$ denoting the partial derivative of $w$ with respect to its $i$-th argument. The right-hand side of \eqref{eq:99b} is the work of the magnetic field, with $f_{\rm m}$, $n_{\rm m}$, and $q_{\rm m}$ being given by
\begin{equation}\label{eq:104bb}
    \begin{aligned}
    &f_{\rm m}=\nabla h^T(z)\mathfrak m_{,2}(x_1,h\circ z,z',b),\\
    &n_{\rm m}=\mathfrak m_{,3}(x_1,h\circ z,z',b),\\
    &q_{\rm m}=\mathfrak m_{,4}(x_1,h\circ z,z',b).
    \end{aligned}
\end{equation}
It is immediate from \eqref{eq:98a} that $n$ and $q$ are not independent, but obey
\begin{equation}\label{eq:102b}
    n\times z'+q\times b=0,
\end{equation}
Here the symbol $\times$ is an antisymmetric scalar-valued product defined as follows: for $a$ and $b$ a pair of vectors, $a\times b=Ra\cdot b$, with $R=\begin{pmatrix}0 &-1\\ 1 &0\end{pmatrix}$ being the counter-clockwise rotation by $\pi/2$. This property is a manifestation of frame indifference through the requirement that the virtual work be not affected by a rigid virtual variations to $(\tilde z,\tilde b)$.

By imposing that \eqref{eq:99b} holds for every admissible variation $(\tilde z,\tilde b)$  we arrive at the equilibrium equations:
\begin{equation}\label{eq:107gg}
\begin{aligned}
    &-(n-r')'=f_{\rm m}-n_{\rm m}',\\
    &-m'+q=q_{\rm m},\color{black}.
\end{aligned}
\end{equation}
By first taking the cross product with $b$, and then by taking the scalar product with $b$, we split the second of \eqref{eq:107gg} in two scalar components, orthogonal and parallel to $b$. Making use of \eqref{eq:102b}, we replace $b\times q$ with $-n\times q$, to arrive at
\begin{equation}\label{eq:107d}
\begin{aligned}
    &-(n-r')'=f_{\rm m}-n_{\rm m}',\\
    &-b\times m'+z'\times n=b\times q_{\rm m},\\
    &\phantom{-}b\cdot(-m'+q)=b\cdot q_{\rm m}.\color{black}
\end{aligned}
\end{equation}

\section{Linearization}\label{sec:linearization}
With a view towards gaining some insight into the mechanical implications of our theory, we seek a specialization to a linear version involving small departures from the \emph{reference configuration} $\mathring z(x_1)=x_1e_1$, $\mathring b(x_1)=e_2$, where $e_1$ and $e_2$ is the canonical basis of $\mathbb R^2$. In order for the regularization to be effective in this regime we shall henceforth take the regularizing exponent $p$ equal to $2$. Although this limit case is not encompassed by the existence theory developed in the previous sections, the latter theory could be generalized in the spirit of the discussion at the end of Subsection \ref{sec:concluding}.

Since $(\mathring z'|\mathring b)=I$, and since the identity tensor $I$ is a minimum of $W_{\rm e}(\cdot;a\otimes a)$,  the reference configuration is a minimum of the elastic part of the energy
\begin{equation}
    \mathcal F_{e}(z',b)=\int_0^L w(x_1,|z'|^2,b'\cdot z')+\mu\ell^2(|z''|^2+2|b'|^2)\,{\rm d}x_1.
\end{equation}
This being said, if $\mathring h$ is a spatially constant applied field and $a(x_1)$ is parallel to $\mathring h$ for all $x_1\in(0,L)$, then the reference configuration is a stationary point of the magnetic part of the energy:
\begin{equation}
    \mathring{\mathcal F}_{m}(z',b)=\int_0^L \mathfrak m(x_1,\mathring h\circ z,z',b')\,{\rm d}x_1.
\end{equation}
As a result, the reference configuration is an equilibrium configuration and the superposition of a small perturbation $\widetilde h$ to $\mathring h$ results into the applied field
\begin{equation}
    h=\mathring h+\tilde h,
\end{equation}
and into a small change of configuration that can be described through
\begin{equation}\label{eq:111c}
\begin{aligned}
    &z(x_1)=x+w(x_1)e_1+v(x_1)e_2,\\
    &b(x_1)=e_2-\phi(x_1)e_1+\psi(x_1)e_2,\color{black}
    \end{aligned}
\end{equation}
with $w$ and $v$ small compared to the length $L$ of the rod, and $w'$, $v'$, $\phi$, and $\psi$ small dimensionless quantities. The quantities $u$ and $v$ represent, respectively, the axial and the transversal displacement, whereas $\phi$ and $\psi$ represent, respectively, the \emph{counter-clockwise} rotation and the stretch of the typical cross section.

The fields $w$, $v$, and $\phi$ are obtained from the solution of the system resulting from the linearization of the equilibrium equations \eqref{eq:107d} and the constitutive equations \eqref{eq:98a}--\eqref{eq:104bb}. The linearized constitutive equations, in particular, depend on a set of constants resulting from the quadratic expansion of the strain energy in the undeformed configuration. To identify these constant, we make recourse to the representation formula
\begin{equation}\label{eq:112b}
\widehat W_{\rm e}(G;a\otimes a)=\widetilde W_{\rm e}(J_1(G),J_2(G),J_4(G,a\otimes a)),
\end{equation}
for the isotropic function $W_{\rm e}$, where
\begin{equation}\label{eq:113d}
\begin{aligned}
&J_1(G)=\operatorname{Tr}(G),&& J_2(G)=\det(G)=\frac 12 \big[J_1^2(G)-\operatorname{Tr}(G^2) \big], && J_4(G,a\otimes a) = \operatorname{Tr}(Ga\otimes a)
\end{aligned}
\end{equation}
are the appropriate invariants for the type of symmetry into play. Here we borrow the notation from three dimensional elasticity, where for transversely isotropic materials the constitutive response can be expressed in terms of a function depending on five invariants $J_i$, $i=1\ldots 5$. However, in 2D the number of independent invariants reduces to three since $J_2=J_3$ and $J_5=\text{Tr}(G^2a\otimes a)$ can be expressed in terms of $J_1$, $J_2$ and $J_4$ through to the Cayley-Hamilton theorem.

From the linearization of \eqref{eq:102b}
one obtains
\begin{equation}\label{eq:67}
   q_1=n_2
\end{equation}
As a result, the virtual-work equation \eqref{eq:99b} takes the form
\begin{equation}\label{eq:115fb}
  \begin{aligned}
    &\int_0^L n_1 \widetilde w'+n_2(\widetilde v'-\widetilde \phi)+p_1\widetilde w''+p_2\widetilde v''-m_1\widetilde\phi'+q_2\widetilde\psi+m_2\widetilde\psi'\color{black}{\rm d}x_1\\
  &\qquad\qquad=\int_0^L f_{\rm m1}\widetilde w+f_{\rm m2}\widetilde v+ n_{\rm m1}\widetilde w'+n_{\rm m2}\widetilde v'-q_{\rm m1}\widetilde\phi+q_{m2}\widetilde\psi \color{black}\,{\rm d}x_1.
  \end{aligned}
\end{equation}
Here, the application of \eqref{eq:98a}, \eqref{eq:111c}, \eqref{eq:112b}, and \eqref{eq:113d} yields the linear constitutive equations:
\begin{equation}\label{eq:116e}
\begin{aligned}
    &n_1=N_{11} w'+N_{12}(v'-\phi)+N_{13}\psi,\\
    &n_2=N_{21}w'+N_{22} (v'-\phi)+N_{23}\psi,\\
    &q_2=Q_{21}w'+Q_{22} (v'-\phi)+Q_{23}\psi,
    \end{aligned}
\end{equation}
where
\begin{equation}\label{eq:16}
\begin{aligned}
  &N_{11}=2\mu+\lambda+a_1^4\alpha_1+2a_1^2\alpha_2,\\
  &N_{13}=\lambda+a_2^2 \alpha_2 + a_1^2 (a_2^2 \alpha_1 + \alpha_2),\\
  &N_{23}=a_1 a_2 (a_2^2 \alpha_1 + \alpha_2 ),\\
  &N_{21}=N_{12}=a_1a_2(a_1^2\alpha_1+\alpha_2),\\
  &N_{22}=\mu+a_1^2a_2^2\alpha_1,\\
    &Q_{21}=\lambda+a_2^2 \alpha_2 + a_1^2 (a_2^2 \alpha_1 + \alpha_2),\\
    &Q_{22}=a_1a_2 (a_2^2 \alpha_1 + \alpha_2),\\
    &Q_{23}=2\mu+\lambda+a_2^4 \alpha_1 + 2 a_2^2 \alpha_2,
\end{aligned}
\end{equation}
with
\begin{equation}\label{eq:118e}
\begin{aligned}
&\mu=-\widetilde W_{\rm e,2}(0,0,0)/2,&&\lambda=\widetilde W_{\rm e,1,1}(0,0,0)+\widetilde W_{\rm e,2}(0,0,0),\\
    &\alpha_1=\widetilde W_{\rm e,4,4}(0,0,0), &&
    \alpha_2=\widetilde W_{\rm e,1,4}(0,0,0),   
\end{aligned}
\end{equation}
where the subscript $i$ denotes partial differentiation with respect to the $i$-th argument. Furthermore, we have
\begin{equation}\label{eq:66}
\begin{aligned}
        &p_1=2\mu\ell^2 w'',\qquad p_2=2\mu\ell^2 v'',\\
        &m_1=-4\mu\ell^2\phi',\qquad m_2=4\mu\ell^2\psi'.\color{black}
\end{aligned}
\end{equation}
Finally, we have
\begin{equation}\label{eq:64}
\begin{aligned}
&f_{\rm m}=M\nabla \widetilde h^T a,\\
&n_{\rm m}=a_1M(I-a\otimes a)\widetilde h-a_1^2M(\mathring h\cdot a)(I-a\otimes a)(we_1+ve_2)
\\&\qquad\quad-a_1a_2M(\mathring h\cdot a)(I-a\otimes a)(-\phi e_1+\psi e_2),\\
&q_{\rm m}=a_2M(I-a\otimes a)\widetilde h-a_2a_1M(\mathring h\cdot a)(I-a\otimes a)(we_1+ve_2)\\
&\qquad\quad -a_2^2M(\mathring h\cdot a)(I-a\otimes a)(-\phi e_1+\psi e_2).
\end{aligned}
\end{equation}
We split each of the equilibrium equations \eqref{eq:107gg} into a component parallel to $e_1$ and a component parallel to $e_2$. By doing so, and by making use of \eqref{eq:67}, we obtain:
\begin{equation}\label{eq:115f}
\begin{aligned}
  &-(n_1-p_1')'=f_{\rm m1}-n_{\rm m1}',\\
  &-(n_2-p_2')'=f_{\rm m2}-n_{\rm m2}',\\
    &-m'_1-n_2=q_{\rm m1},\\
    &-m'_2+q_2=q_{\rm m2}.\color{black}
    \end{aligned}
  \end{equation}
The above splitting is convenient in the special cases when the magnetic fibers are either parallel or orthogonal to the axis. In both cases the product $a_1a_2$ vanishes and the coupling coefficients $N_{21}$, $N_{12}$, $N_{23}$, $Q_{22}$ vanish as well. As a result, the system \eqref{eq:115f} admits a further splitting into two pairs of equations: the first and the fourth, which rule unknowns $w$ and $\psi$ (axial displacement and transverse stretch); the second and the third, which govern $v$ and $\phi$ (transverse displacement and rotation). We shall make use of this observation in the next section, where we examine an equilibrium problem for a rod whose magnetic fibers are parallel to the axis.

\section{Size-dependent magneto-elastic buckling}\label{sec:buckl-induc-magn}
We consider a cantilever beam whose reference configuration is shown in the following figure.
\begin{figure}[H]
\begin{center}
\def\svgwidth{0.7\textwidth}
   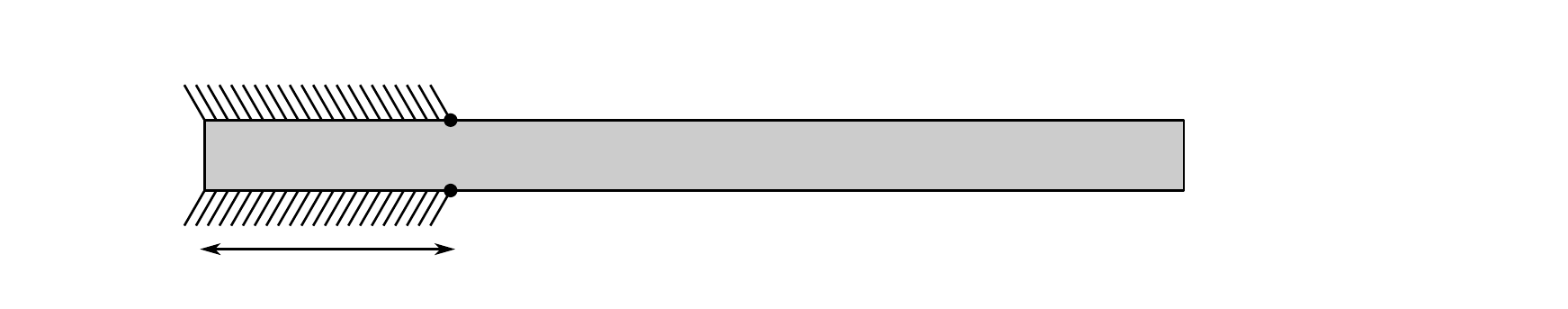
\end{center}
\caption{A cantilever with magnetic fibers parallel to the axis is acted upon by a uniform magnetic field whose direction is opposite to that of fibers. Although the model we deal with is one-dimensional, it is represented as two-dimensional body to better show its geometrical features.}
\label{geomlinear}
\end{figure}
In the reference configuration the cantilever has length $L=\overline L+\Delta$ and spans the interval $(-\Delta,L)$. The cantilever is clamped by a constraining device that, in the limit $t\to 0$, imposes null displacement and null rotation in the interval $(-\Delta,0)$. This arrangement guarantees not only that displacement and rotation vanish at the point $x_1=0$, but also that the displacement gradient vanishes at the same point, since displacement must be continuous together with its derivative.
As also shown in the above figure, the direction $a$ of the fibers is assumed to be parallel to the axis $e_1$, and the magnetic field $\mathring h$ has opposite direction with respect to $a$:
\begin{equation}\label{eq:65}
    a=e_1,\qquad \mathring h=-He_1,\qquad \widetilde h=0.
  \end{equation}
  We also assume the magnetization $M$ to be constant.

  As pointed out in Section \ref{sec:linearization}, the reference configuration is a stationary point of the energy $\mathcal F$ defined in \eqref{eq:102f}, and small departures from this configuration, as described by \eqref{eq:111c}, satisfy the virtual work equation \eqref{eq:115f}. With the loading environment considered in the present case, the magnetic energy has in fact a maximum, as opposite to the strain energy, which has a local minimum. As one can expect intuitively (an intuition that is confirmed in the foregoing) the applied field plays the same role as a compressive thrust, so that there exists a critical value of the applied field which renders the undeformed configuration unstable.

To make the argument quantitative, we first deduce the system of equations governing the unknowns $v,w,\phi$, and $\psi$ introduced in \eqref{eq:111c}. To begin with, we work out the  linearized formulae \eqref{eq:64} for the magnetic loads $f_{\rm m}$, $n_{\rm m}$, and $q_{\rm m}$; using \eqref{eq:65} we obtain:
\begin{equation}
    f_{\rm m}=0,\qquad n_{\rm m}=HM v'e_2,\qquad q_{\rm m}=0.
  \end{equation}
This, the magnetic loading appears only in the second equation of $\eqref{eq:115f}$; as anticipated at the end of the previous section, this equation contains only the unknowns $v$ and $\phi$; indeed, by making use of the linearized constitutive equations \eqref{eq:116e}--\eqref{eq:66}, the second of $\eqref{eq:115f}$ becomes:
  \begin{subequations}\label{eq:125f}
  \begin{equation}
      2\mu\ell^2 v^{(4)}-\mu (v'-\phi)'+HM v''=0.
    \end{equation}
Using \eqref{eq:116e}--\eqref{eq:66} again, the third of \eqref{eq:115f} becomes:
\begin{equation}\label{eq:68}
  4\mu\ell^2\phi''+\mu(v'-\phi)=0.
\end{equation}
\end{subequations}
The substitution of \eqref{eq:116e}--\eqref{eq:66} in the remaining equilibrium equations (namely, the first and the last of \eqref{eq:115f}) yield differential equations that involve the axial displacement $w$ and the transverse stretch $\psi$, and to not contain neither  $v$ nor $\phi$. These equations have only the trivial solution, and hence can be ignored. We can therefore focus on \eqref{eq:125f}.  

We seek a solution of these equations in the interval $(\Delta,L)$, supplemented by the essential conditions:
\begin{equation}\label{eq:126e}
\begin{aligned}
    v(\Delta)=0,\qquad \phi(\Delta)=0,\qquad v'(\Delta)=0,
\end{aligned}
\end{equation}
and by the natural conditions:
\begin{equation}\label{eq:70}
    \begin{aligned}
    &n_2(L)-n_{2m}(L)-p_2'(L)=0,\\
    &p_2(L)=0,\\
    &m_2(L)=0.
    \end{aligned}
    \qquad\Leftrightarrow\qquad
    \begin{aligned}
    & \mu(v'(L)-\phi(L))-HMv'(L)-2\mu\ell^2v'''(L)=0,\\
    & 2\mu\ell^2v''(L)=0,\\
    &-4\mu\ell^2\phi'(L)=0.
    \end{aligned}
\end{equation}
As already point out, the term proportional to $HM$ has a destabilizing effect, as a quadratic expansion of the energy would confirm. The mechanical interpretation is the following: the magnetic fibers would like to be aligned with the applied field, but their rotation is hindered by the stiffness of the structure. Buckling occurs when the destabilizing effect of the magnetic torques equals the stabilizing effect of the rod stiffness.

The critical value of the intensity of the magnetic field are determined by imposing that the linearized equilibrium problem has non-trivial solutions. To this aim, we integrate \eqref{eq:125f} to obtain:
\begin{equation}\label{eq:72}
  2\mu\ell^2 v'''-\mu(v'-\phi)+HM v'=c_1.
\end{equation}
From the first of \eqref{eq:70} we deduce $c_1=0$. Using \eqref{eq:68} in \eqref{eq:72} and integrating once we obtain:
\begin{equation}\label{eq:74}
4 \mu \ell^{2} \phi' +H M v+2 \mu l^{2} v'' =c_{2}.
\end{equation}
Using \eqref{eq:125f} to eliminate the unknown $\phi$ from \eqref{eq:74} and integrating once to arrive at the following equation:
\begin{equation}\label{eq:71}
  {2\ell^2}\left((3\mu -2HM)v''-4 \mu\ell^2  v''''\right)+HM v=c_2,
\end{equation}
where $c_2$ is a constant. The general solution of \eqref{eq:71} is
\begin{equation}\label{eq:73}
  v(x)=C_0+\sum_{i=1}^4C_i\exp(\lambda_i x_1/\ell),
\end{equation}
where $\lambda_i$, $i=1\ldots 4$, are the roots of the characteristic polynomial of \eqref{eq:71}, given by
\begin{equation}
  \lambda_i=\pm\frac{1}{2\sqrt{2}}\sqrt{\left(3 -2 \overline{HM}\right)\pm\sqrt{\left(3 -2 \overline{HM}\right)^2+2\overline{HM}}}
\end{equation}
with $\overline{HM}=HM/\mu$. The possible values of the constants $C_i$, $i=0\ldots 4$ in \eqref{eq:73} are then filtered by the three boundary conditions in \eqref{eq:126e}, and the second and the third boundary conditions in \eqref{eq:70}, which result into a linear system whose coefficients depend on the triplet $(HM/\mu,\ell/L)$. The system is singular when a characteristic equations of the form
\begin{equation}
  \Phi\left(\frac{HM}{\mu},\frac{\ell}{L}\right)=0
\end{equation}
holds true, with $\Phi$ a function whose expression has been determined using {\texttt{Mathematica}\textregistered} \cite{Mathematica}. We used the same package to performs a numerical root finding procedure. Table \ref{tab:1} shows the renormalized critical value $(HM)_c$ with $\ell/L$ ranging from $0.1$ to $0.2$ with step-size $0.01$, which confirms size-dependent behavior, with increasing values of material scale $\ell$ associated to increasing critical values of the magnetic field.
\begin{table}[H]\label{tab:1}
\centering
\begin{tabular}{|
  | c |c c c c c c c c c c c||} 
 \hline
 $\ell/L$ & 0.10& 0.11& 0.12& 0.13& 0.14& 0.15& 0.16& 0.17& 0.18& 0.19 & 0.20 \\ [0.5ex] 
 \hline
  ${(HM)}_{c}/\mu$ & 0.46& 0.47& 0.49& 0.51& 0.53& 0.56& 0.58& 0.61& 0.63& 0.66& 0.68\\
 \hline
\end{tabular}
\caption{Size dependence of the critical value $HM_c$.}
\label{table:1}
\end{table}

\color{black}\section{Numerical verification}
The weak form equation \eqref{eq:99b} of the proposed model was implemented into the \texttt{COMSOL Multiphysics \textregistered} software \cite{comsol}, the objective of this being the numerical verification of the convergence of the 2D-model for $t\rightarrow 0$ to the 1D one.\color{black}

The geometry of the specimen  is the same as in Fig.~\ref{geomlinear}. However, the magnetic field has different orientation, as shown in Fig.~\ref{geom} \color{black}with the 2D domain defined by
\[
\Omega_t = (0,L)\times\Big(-\frac t 2, \frac t 2\Big).
\]
For the reader's convenience we show the arrangement again in the following figure:
\begin{figure}[H]
\begin{center}
\def\svgwidth{1.0\textwidth}
   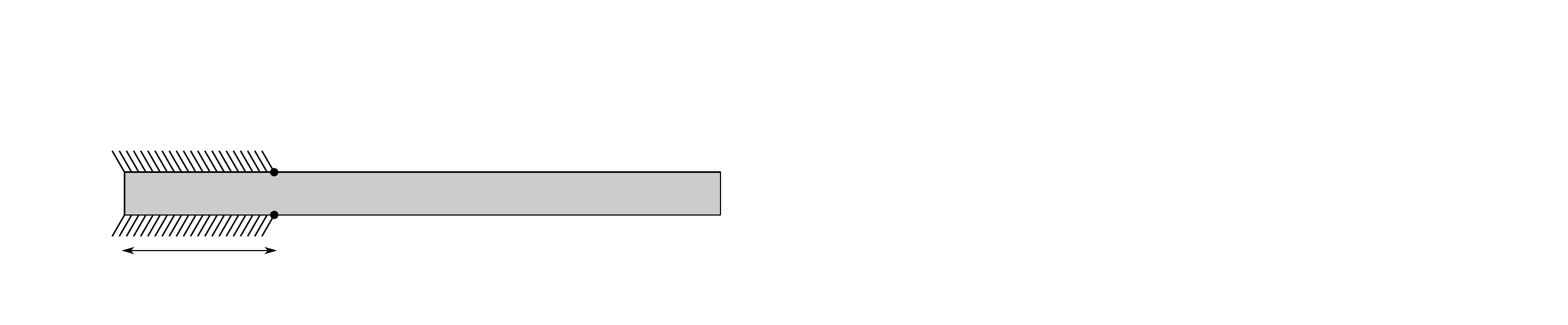
\end{center}
\caption{Geometry parameters of the actuators in its reference (a) and deformed (b) shapes. $h$ is the external magnetic field.}
\label{geom}
\end{figure}

A condition $u=0$ is specified in the region $\partial\Omega_u=\partial \Omega_t\cap\lbrace(x_1,x_2)\in\Omega_t | x_1\leq \Delta\rbrace$ so that $\Delta$ is the width of the clamped region and $\overline{L}=L-\Delta$ is the actual length of the rod. As already pointed out in Sec.~\ref{sec:buckl-induc-magn}, such a clamping condition is used to impose in the 2D model a condition equivalent to
\[
z(\Delta)=0,\qquad z'(\Delta)=0,\qquad b(\Delta)=e_2
\]
which is indeed used in the 1D model. 

The magnetic part of the energy is the one in Eq.~\eqref{eq:53} with 
\[
M(x) = M,\qquad h = H \cos(\varphi)e_1+ H \sin(\varphi) e_2
\]
i.e., the magnetization is assumed constant and $H$ is the intensity of the magnetic field.

For the elastic part of the energy, we focus on the widely used Saint Venant--Kirchhoff model (SV-K) with a transversely isotropic elasticity tensor; this choice is legitimate, since the presence of the higher-order term in \eqref{eq:3} renders the problem well posed even if the strain-energy density $W_{\rm e}(F; a \otimes a)$, is not polyconvex. In addition the SV-K energy depends on the same number of constitutive parameters of the linear theory: the four material parameters $\mu$, $\lambda$, $\alpha_1,$, $\alpha_2$, which we have defined in terms of $\widehat W_{\rm e}$ through \eqref{eq:69}, \eqref{eq:112b}, and \eqref{eq:118e}. The SV-K energy can be equivalently written as (see also \cite{Weiyi1999}):\color{black}
\begin{equation}
  \begin{aligned}
  W_{\rm e}(F; a\otimes a)=\widehat W_{\rm e}(G;a\otimes a),\qquad G=\frac 12(F^TF-I),
  \end{aligned}
\end{equation}
where
\begin{equation}
  \widehat W_{\rm e}(G;a\otimes a)=\Big(\frac \lambda 2 +\mu\Big) J_1^2(G)\color{black}-2\mu\, J_2(G)\color{black}+\frac{\alpha_1}{2} J_4^2(G,a\otimes a)\color{black}+\alpha_2 J_1(G)\color{black} J_4(G,a\otimes a)\color{black}\color{black},\\
\end{equation}
The first Piola-Kirchhoff stress tensor for this model can be obtained as $P=F\partial W_e/\partial G$, i.e.,
\begin{align}
P\color{black}=&\lambda J_1\,  F+2 \mu\, F G+ \alpha_1 J_4 FA + \alpha_2\big(J_1 F A+ J_4 F\big),\color{black} \label{SKModel}
\end{align}
where $A=a \otimes a$.

In order to implement the regularizing strain gradient term in the 2D model, we follow \cite{Horak2020} and we introduce an auxiliary second-order tensor field $\chi$ and its gradient. Subsequently, the continuum constraint $\chi=F$ is enforced weakly using penalty approach\footnote{The constraint $\chi=F$ was implemented in \texttt{Comsol} through the option \emph{Weak Constraint}, rather than directly defining it in the energy. The lagrangean multiplier $H_\chi$ was automatically determined by the software.}. Accordingly, for the 2D formulation we consider the following \emph{enhanced} functional
$$
\check {W}\color{black}({F}, {\chi}, {\nabla} \chi; a \otimes a)=W_{\rm e}(F; a \otimes a)+\frac{1}{2} H_{\chi}(F-\chi)^{2}+\frac{1}{2} K \,\nabla \chi\cdot {\nabla} \chi
$$
in which $H_{\chi}$ represents a new material parameter acting as a penalization forcing the new auxiliary field $\chi$ to remain as close as possible to $F$, whence for the limit case $H_{\chi} \rightarrow \infty$ a standard strain-gradient model is recovered.  When dealing with such a strain gradient regularization,  it is well known that using the same interpolation of the displacement and of the auxiliary field $\chi$ may lead to locking, in the sense that the results may become insensitive to the internal length parameter and strongly dependent upon the penalty parameter. This is mainly due to the incompatibility between finite element approximation and kinematic requirements that links $\chi$ to $F$ which has the same order as the gradient of displacement. To overcome this issue, Hermite quartic polynomials were used for the interpolation of the (first)gradient term, whereas  Argyris quintic polynomials for the second gradient part. In addition, the number of elements $N$ used for the discretization were set to be $
N = \displaystyle \operatorname{Min}\lbrace  \ell, t\rbrace$/3,
\color{black}in order to guarantee an accurate approximation of the solution when either the internal length $\ell$ or the thickness $t$ becomes very small. For all simulations, the geometry of the specimen was set to $\Delta=0.3 \overline L$, and the following values of the constitutive parameters were used
\[
\lambda = 10\mu\color{black},\quad \alpha_1=\alpha_2= \frac\mu 5
\]
for the (first)gradient part, 
\[
K_\chi =2\, \mu\,\ell^2, \quad \ell=0.1 {\overline L}\,.
\]
for the second gradient terms, whereas
\[
H M = \mu
\]
is the equivalent magnetic stiffness.

We assess the numerical convergence of the model in three cases:
\begin{enumerate}[(a)]
    \item \label{a} a purely mechanical problem (H=0) with an external body force applied in the $e_1$ direction,
    \item \label{b} for inclusions uniformly distributed, such that $a(x_1)=e_1$ and $\varphi=\frac \pi 3$,
    \item \label{c} for inhomogeneously oriented inclusions, $a(x_1)=cos(\theta)e_1+\sin(\theta)e_2$, $\theta=\frac{\pi}{2}+\frac{\pi}{\overline{L}}\,(x_1-\Delta)$ and $\varphi=\pi$.
\end{enumerate}

The results of the simulations for case \ref{a} are shown in Fig.~\ref{Fig0} . In this simplified analysis, no reinforcing fibres were considered, i.e., $a_1=a_2=0$, and a body force with intensity $q=\mu$ was applied in the direction $e_1$. As seen from the figure, the results are rather insensitive to the thickness of the model and in fact the different curves obtained for $t/\overline{L}\in\lbrace 0.001,0.01,0.05,0.1\rbrace$ are almost indistinguishable. The 1D model perfectly matches the results of the numerical 2D simulations. Figure~\ref{Fig0}.1 shows that the director field $b_2$ obtained from the numerical simulations.

\begin{figure}[H]
\begin{footnotesize}
\begin{center}
\def\svgwidth{0.3\textwidth}
\begingroup%
  \makeatletter%
  \providecommand\color[2][]{%
    \errmessage{(Inkscape) Color is used for the text in Inkscape, but the package 'color.sty' is not loaded}%
    \renewcommand\color[2][]{}%
  }%
  \providecommand\transparent[1]{%
    \errmessage{(Inkscape) Transparency is used (non-zero) for the text in Inkscape, but the package 'transparent.sty' is not loaded}%
    \renewcommand\transparent[1]{}%
  }%
  \providecommand\rotatebox[2]{#2}%
  \newcommand*\fsize{\dimexpr\f@size pt\relax}%
  \newcommand*\lineheight[1]{\fontsize{\fsize}{#1\fsize}\selectfont}%
  \ifx\svgwidth\undefined%
    \setlength{\unitlength}{366.46761676bp}%
    \ifx\svgscale\undefined%
      \relax%
    \else%
      \setlength{\unitlength}{\unitlength * \real{\svgscale}}%
    \fi%
  \else%
    \setlength{\unitlength}{\svgwidth}%
  \fi%
  \global\let\svgwidth\undefined%
  \global\let\svgscale\undefined%
  \makeatother%
  \begin{picture}(1,0.12810632)%
    \lineheight{1}%
    \setlength\tabcolsep{0pt}%
    \put(0,0){\includegraphics[width=\unitlength,page=1]{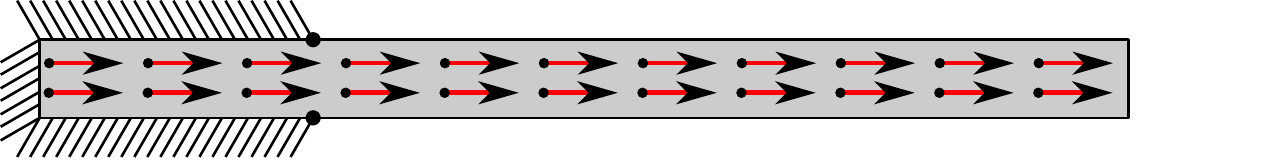}}%
    \put(0.9557511,0.00627185){\makebox(0,0)[lt]{\lineheight{1.25}\smash{\begin{tabular}[t]{l}$q$\end{tabular}}}}%
    \put(0,0){\includegraphics[width=\unitlength,page=2]{simplified_results.pdf}}%
    \put(0.9557511,0.00627185){\makebox(0,0)[lt]{\lineheight{1.25}\smash{\begin{tabular}[t]{l}$q$\end{tabular}}}}%
    \put(0,0){\includegraphics[width=\unitlength,page=3]{simplified_results.pdf}}%
  \end{picture}%
\endgroup%
\\[0.2cm]
\includegraphics[width=8cm]{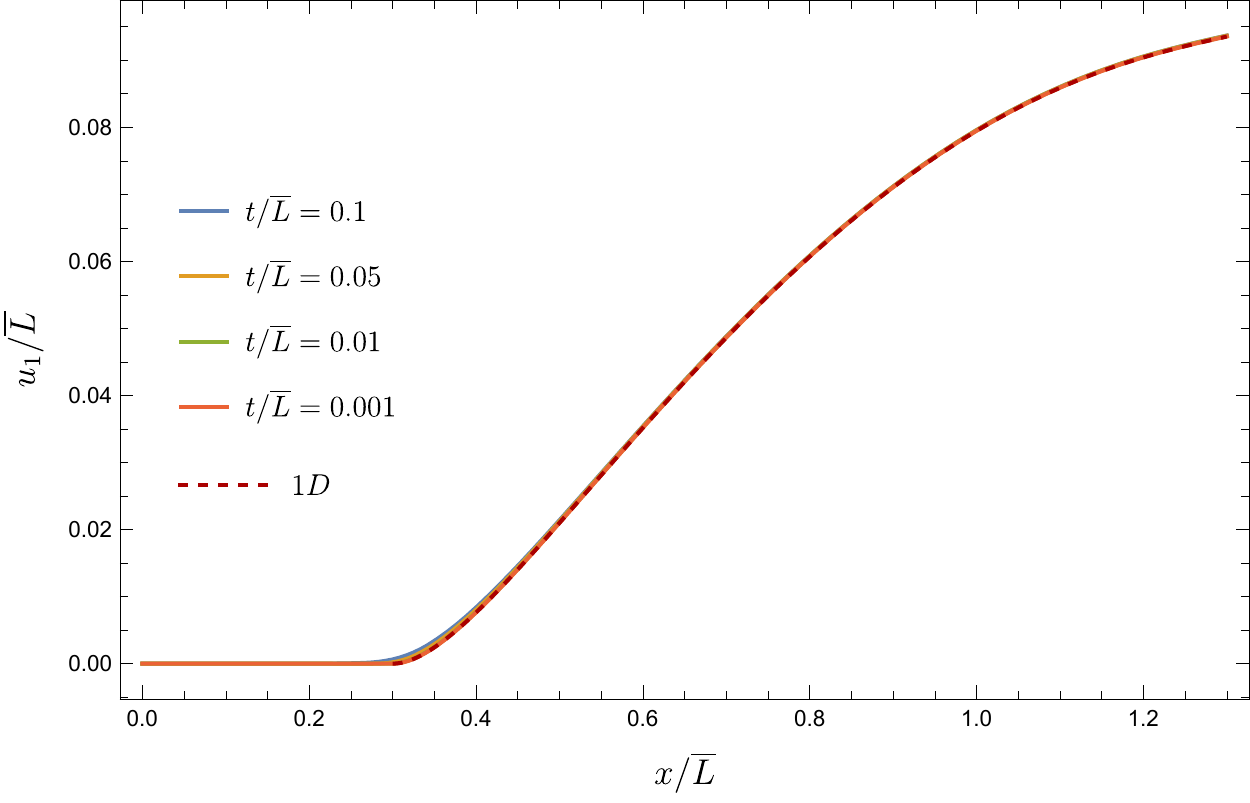}
\includegraphics[width=8cm]{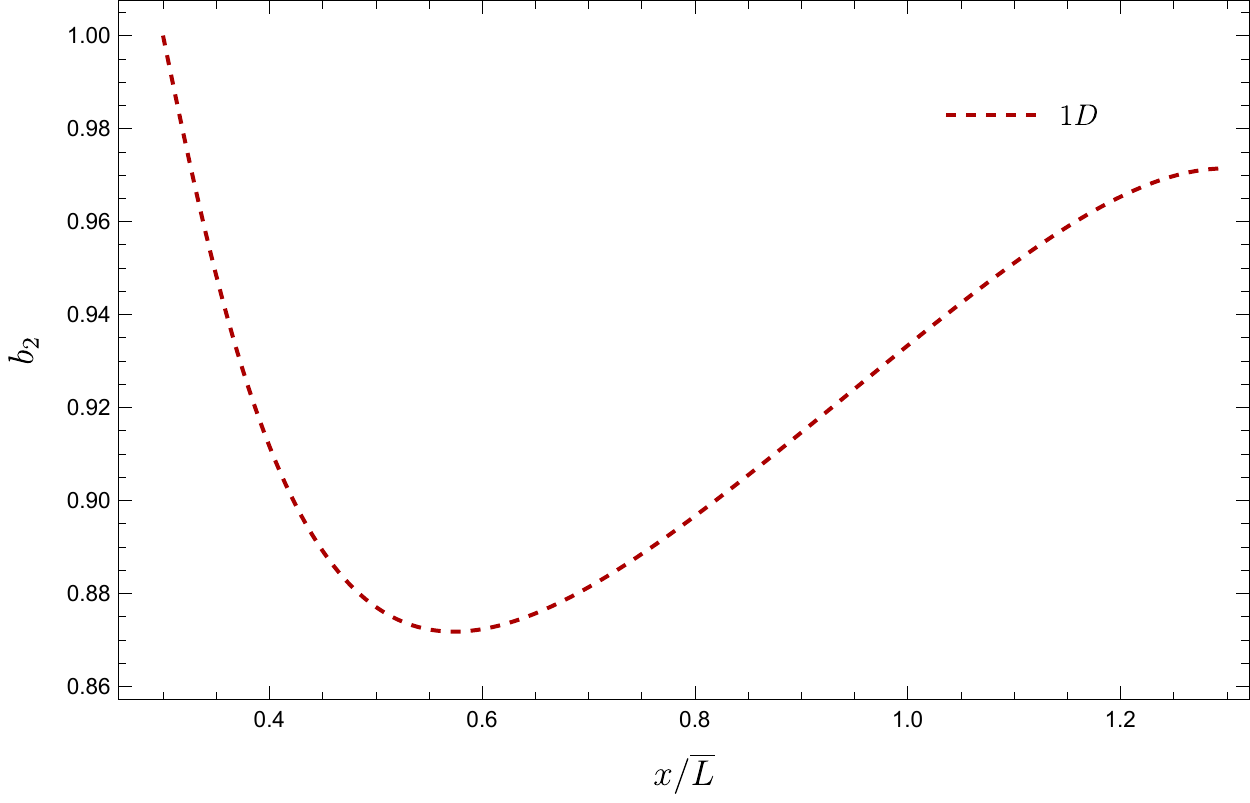}
\end{center}
\caption{Horizontal displacement of the center line (left) and director field (right) for the analysis (a) with a a distributed load $q$, without any reinforcing fibres.
\label{Fig0}}
\end{footnotesize}
\end{figure}

To assess further the numerical convergence of the model cases (b) and (c) are examined in Fig.~\ref{fig:1}. The results are shown in terms of  the deformation of the centreline $\Gamma=\lbrace(x_1,x_2)\in \Omega_t\quad \vert\quad x_2=0\rbrace$, which is plotted for $t/\overline{L}\in\lbrace 0.001,0.01,0.05,0.1\rbrace$. The insets in the figure show the deformed configuration of the 2D model for $t/\overline{L}=0.05$. The convergence of the 2D model to the 1D one is apparent in the figure and in both (b) and (C) the deformed configurations are practically indistinguishable for thickness to length--ratio lower than $0.01$, with the 1D model being represented with a dashed line.

For the sake of completeness we show in Fig.~\ref{fig:2} the director field $b$ corresponding to the solutions in Fig.~\ref{fig:1}. It is noted that in case (b) (left of Fig.~\ref{fig:2}) the vector $b$ practically does not change its length, whereas for the deformation of case (c) (right of Fig.~\ref{fig:2}), close to the clamp, the vector $b$ is compressed by 20 \%. 

\begin{figure}[H]
\begin{center}
\includegraphics[width=8cm]{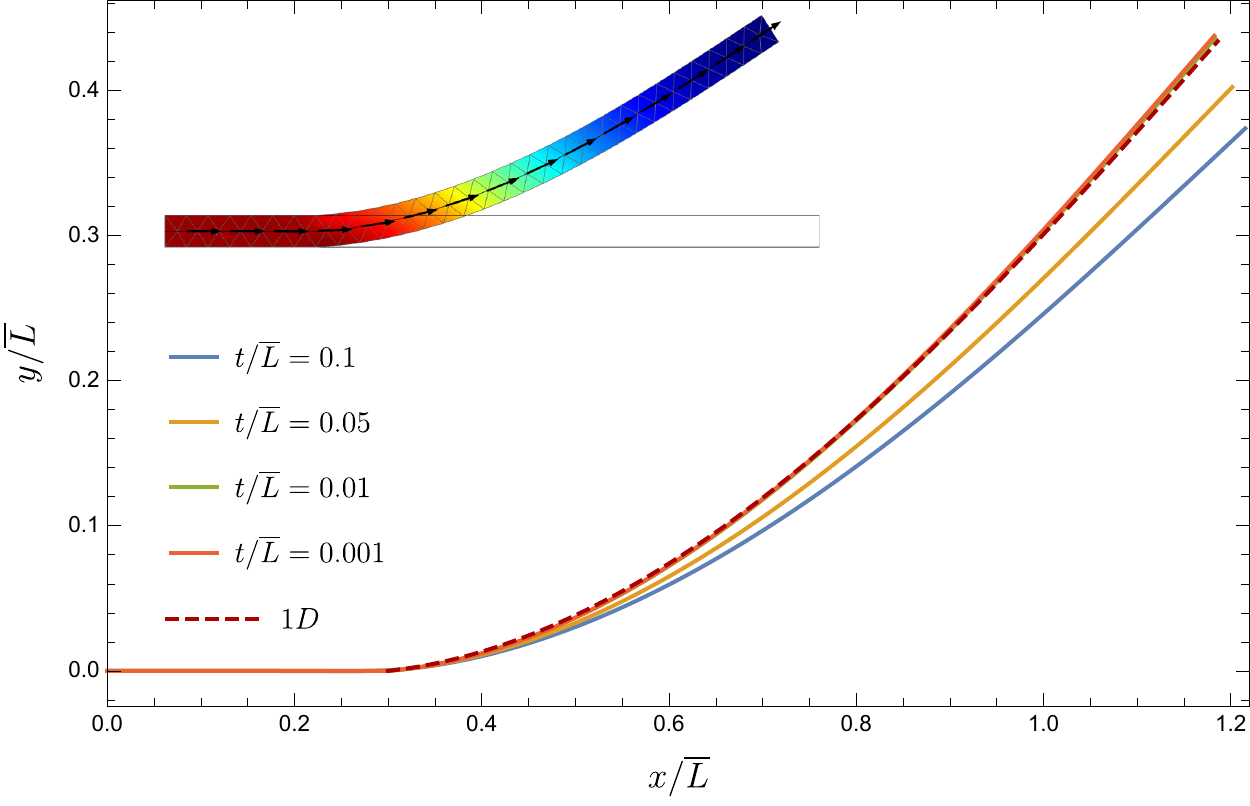}\!
\includegraphics[width=8cm]{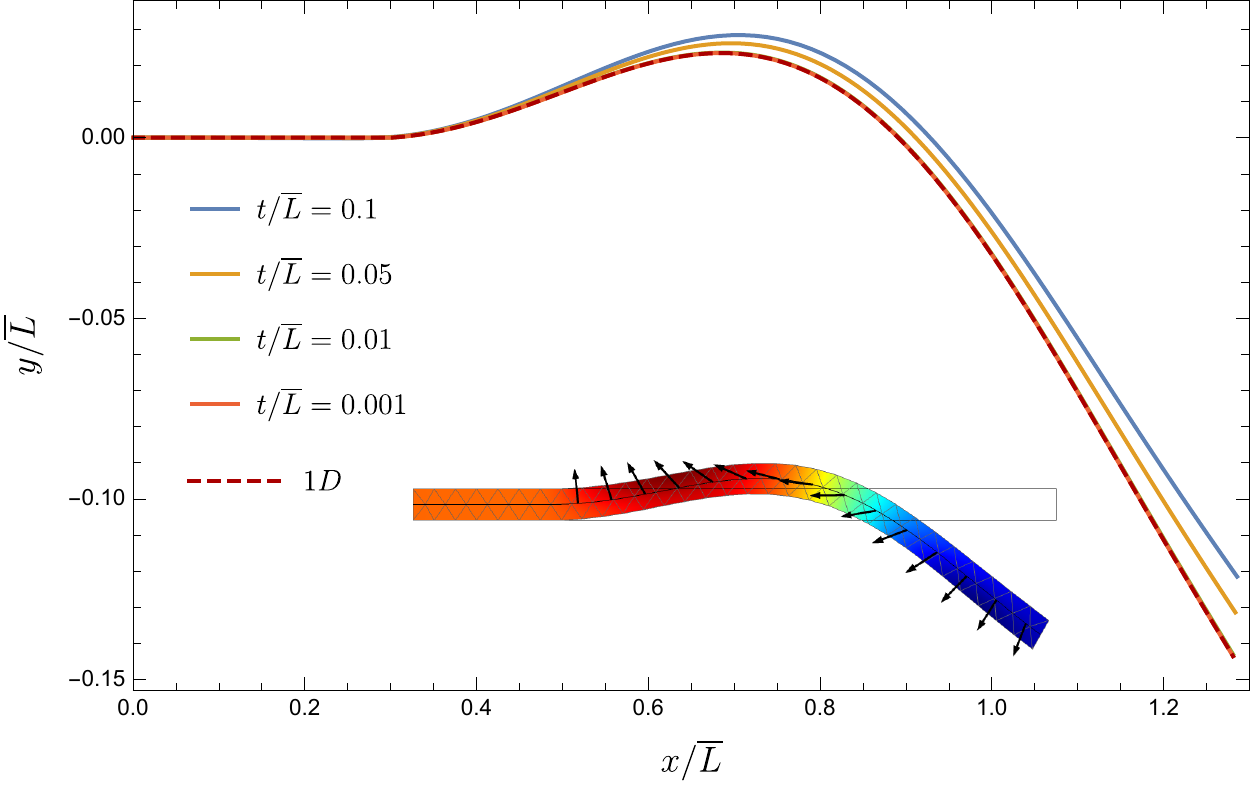}
\end{center}
\caption{Center line deformation for different values of the ratio $t/\overline{L}\in\lbrace 0.001,0.01,0.05,0.1\rbrace$ between the thickness and internal length, for case (\ref{b}) - left  ($\theta=0$ and $\varphi=\pi/3$) and (\ref{c}) - right ($\theta=\frac\pi 2 + \frac{\pi}{\overline{L}}\,(x_1-\Delta)$ and $\varphi=\pi$). The inset shows the deformed configuration for the case $t/\overline{L}=0.05$ with the color code representing the values of the higher order term $\chi_{11}$.}
\label{fig:1}
\end{figure}

\begin{figure}[H]
\begin{center}
\includegraphics[width=8cm]{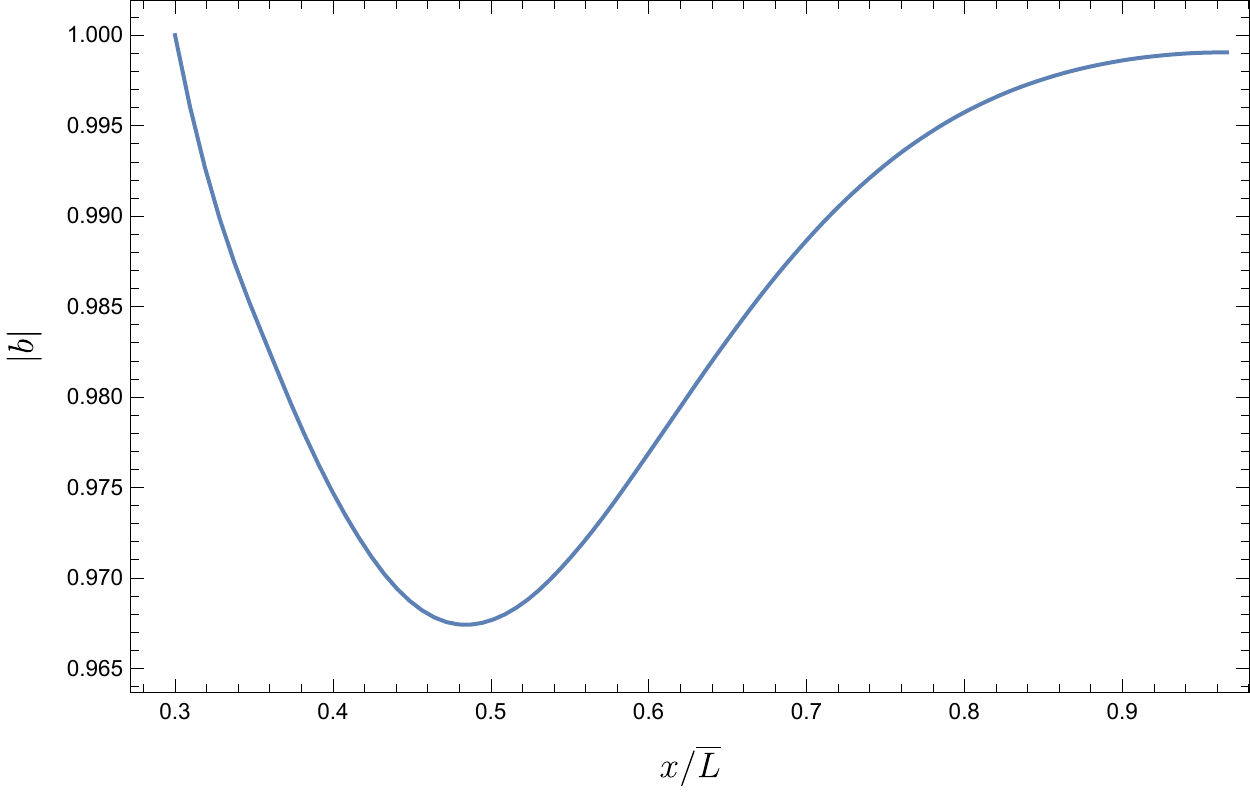}\!
\includegraphics[width=8cm]{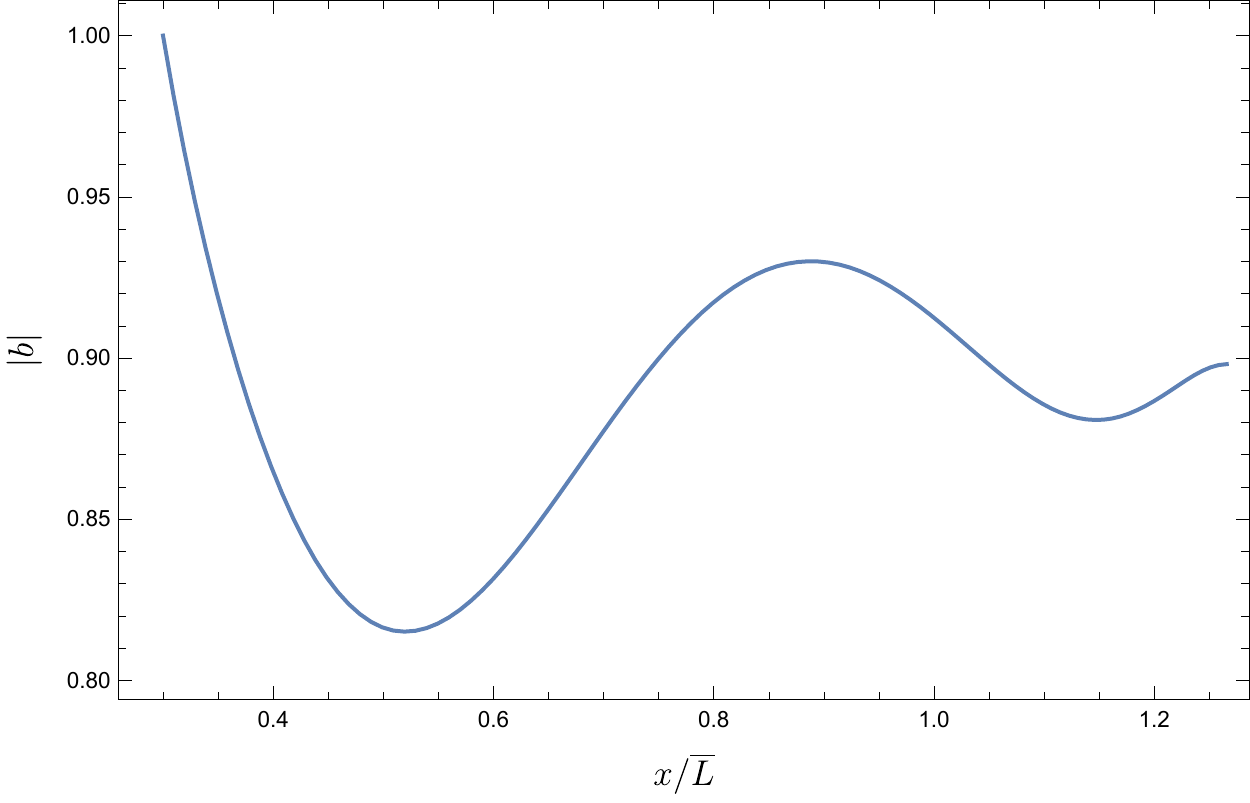}
\end{center}
\caption{Intensity of the vector field $b$ for the solutions of case (b) (left) and (c) (right) in Fig.~\ref{fig:1}.}
\label{fig:2}
\end{figure}



\section*{Appendix}

\begin{definition}[Polyconvexity in dimension 2]
We say that $\varphi: \mathbb{R}^{2 \times 2} \rightarrow \mathbb{R} \cup\{+\infty\}$ is polyconvex if there exists a convex and lower semicontinuous function $\hat{\varphi}: \mathbb{R}^{2\times 2}\times\mathbb R\rightarrow \mathbb{R} \cup\{+\infty\}$ such that $\varphi({F})=\hat{\varphi}({F},\det{F})$.
\end{definition}

\begin{proposition}There exist $a,h\in\mathbb R^2$ with $|a|=1$ such that the map $\varphi:\mathbb R^{2\times 2}\to\mathbb R$ defined by
  \begin{equation}
    \varphi(F)=\frac{h\cdot Fa}{|Fa|}
  \end{equation}
  is not rank-1 convex.
\end{proposition}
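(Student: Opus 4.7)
\medskip

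\noindent\textbf{Proof plan for the Proposition.}

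My plan is to exploit the scale invariance of the candidate integrand. Since $\varphi(F)=h\cdot Fa/|Fa|$ depends only on the \emph{direction} of the vector $Fa\in\mathbb R^{2}$ (being homogeneous of degree zero in $Fa$), it behaves angularly, and such quantities are generically not convex along affine lines. Since rank--one convexity of $\varphi$ is equivalent to requiring that $t\mapsto \varphi(F+t\,u\otimes v)$ be convex on $\mathbb R$ for every choice of $F$, $u$, $v\in\mathbb R^{2}$, it will suffice to exhibit a single admissible quintuple $(F,u,v,a,h)$ for which this one--dimensional function fails to be convex.

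The crucial simplification is to align the rank--one direction with $a$. I will take $v=a$, so that $(F+t\,u\otimes a)a = Fa+tu$, and the slice reduces to
\begin{equation*}
\psi(t)=\frac{h\cdot(Fa+tu)}{|Fa+tu|}.
\end{equation*}
Thus the question becomes a purely two--dimensional one: find $h$, $y_{0}:=Fa$, and $u$ in $\mathbb R^{2}$ such that the planar function $\psi(t)=h\cdot(y_{0}+tu)/|y_{0}+tu|$ fails to be convex on some interval. Geometrically $\psi(t)$ is the signed cosine of the angle between $h$ and the line $t\mapsto y_{0}+tu$, weighted by $|h|$, and a picture makes it plain that this quantity cannot be convex along every line.

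For a concrete construction I will set $a=e_{1}$ (so $|a|=1$), $h=e_{1}$, $F=e_{2}\otimes e_{1}$, and take the rank--one perturbation $u\otimes v=e_{1}\otimes e_{1}$. Then $Fa=e_{2}\neq 0$, $Fa+tu=e_{2}+te_{1}$ remains nonzero for every $t\in\mathbb R$, and a direct computation gives
\begin{equation*}
\psi(t)=\frac{t}{\sqrt{1+t^{2}}},\qquad \psi''(t)=-\frac{3t}{(1+t^{2})^{5/2}}.
\end{equation*}
Since $\psi''(t)<0$ for every $t>0$, the slice is strictly concave on $(0,\infty)$ and hence not convex on $\mathbb R$. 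This contradicts rank--one convexity and completes the argument.

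The only conceptual obstacle is choosing the rank--one direction wisely; the naive slice through a generic direction $v$ would produce the messier function $h\cdot (Fa+t(v\cdot a)u)/|Fa+t(v\cdot a)u|$ and could even be identically constant when $v\perp a$. Aligning $v$ with $a$ and then selecting $y_{0}=Fa$ orthogonal to $h$ with $u$ parallel to $h$ yields the cleanest instance of the angular non--convexity, and the rest reduces to a one--line derivative computation.
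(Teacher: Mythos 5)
Your proof is correct, and it is in the same spirit as the paper's: restrict $\varphi$ to a rank-one line and check convexity of the resulting one-variable function. The difference is that the paper stops once it has written down the generic slice $f(\lambda)=(\alpha+\lambda\beta)/\sqrt{1+2\gamma\lambda+\lambda^2\delta^2}$ and simply remarks that "$f$ is not necessarily convex,'' without exhibiting a concrete choice of parameters; your version actually closes the argument by producing an explicit instance, $\psi(t)=t/\sqrt{1+t^2}$ with $\psi''(t)=-3t/(1+t^2)^{5/2}<0$ for $t>0$. That is a genuine completion of the paper's sketch, and your observation that taking $v=a$ makes the slice depend only on $Fa$ is exactly the right simplification.

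One minor infelicity worth flagging: your base point $F=e_2\otimes e_1$ is singular, and in fact the whole line $F+t\,e_1\otimes e_1$ has $\det=0$. This does not affect the validity of the Proposition as stated (it only concerns $\varphi$ on the set $Fa\neq 0$), but in the elasticity context of the Remark that motivates this Appendix one normally cares about the orientation-preserving cone $\det F>0$. The same computation can be carried out there with no extra effort: take $F_0=I$, $a=e_1$, $h=e_2$, and the rank-one perturbation $e_2\otimes e_1$. Then $\det(I+\lambda\,e_2\otimes e_1)=1>0$ for all $\lambda$, and
\begin{equation}
\varphi\bigl(I+\lambda\,e_2\otimes e_1\bigr)=\frac{h\cdot(e_1+\lambda e_2)}{|e_1+\lambda e_2|}=\frac{\lambda}{\sqrt{1+\lambda^2}},
\end{equation}
which is precisely your $\psi$. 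This also matches the paper's parameterization with $\alpha=0$, $\beta=\delta=1$, $\gamma=0$, so you recover the paper's slice exactly while staying inside the physically relevant cone.
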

\begin{proof}
Let $b,c\in\mathbb R^2$. For $|\lambda|$ small, we have
$\operatorname{det}(I+\lambda b\otimes c)>0,$ hence the function
$$
f(\lambda)=\varphi\left(I+\lambda b\otimes c\right)
$$
is well defined and twice coutinuouly differentrable. Then
$$
\begin{aligned}
f(\lambda)=\frac{h \cdot a+\lambda(c \cdot a)(b \cdot h)}{|a+\lambda(c \cdot a) b|} &=\frac{h \cdot a+\lambda(c \cdot a)(b \cdot h)}{\sqrt{1+2 \lambda(c \cdot a)(b \cdot a)+\lambda^{2}(c\cdot a)^{2}|b|^{2}}} \\
&=\frac{\alpha+\lambda \beta}{\sqrt{1+2 \gamma \lambda+\lambda^{2} \delta^{2}}},
\end{aligned}
$$
where we have set $\alpha=h\cdot a$, $\beta=(c\cdot a)(b\cdot h)$, $\gamma=(c\cdot a)(b\cdot a)$ and $\delta=(c\cdot a)|b|$.  Note that $f$ is not necessarily convex, i.e., $\varphi$ is not rank-one convex and it implies that it is not polyconvex \cite{dacorogna}.
  \end{proof}

\section*{Acknowledgments}
GT is supported by Project PRIN 2017 \#2017KL4EF3\_004, ``Mathematics of active materials: From mechanobiology to smart devices'', and the Italian MIUR Project of Departments of Excellence. JC and GT acknowledge support from the Italian INdAM-GNFM. JC acknowledges the support of Sapienza University through the grant RM11916B7ECCFCBF.  MK acknowledges the support by the GA\v{C}R-FWF project 19-29646L. \color{black}


\end{document}